\pgfplotsset{compat=newest,compat/show suggested version=false}
\pgfplotsset{
    legend image with text/.style={
        legend image code/.code={%
            \node[anchor=center] at (0.3cm,0cm) {#1};
        }
    },
}
\newcommand*{\overtabline}{%
  \noalign{%
    \vskip-.5\dimexpr\ht\@arstrutbox+\dp\@arstrutbox\relax
    \vskip-.2pt\relax
    {\color{red}\hrule height 0.8pt}
    \vskip-.2pt\relax
    \vskip+.5\dimexpr\ht\@arstrutbox+\dp\@arstrutbox\relax
  }%
}
\newtheorem{theorem}{Theorem}
\newtheorem{lemma}[theorem]{Lemma}
\newtheorem{proposition}[theorem]{Proposition}
\newtheorem{definition}[theorem]{Definition}
\newtheorem{example}[theorem]{Example}
\newtheorem{remark}[theorem]{Remark}
\newtheorem{instruction}[theorem]{Instruction}
\newcommand\mcal{\mathcal}
\newcommand\mbf{\mathbf}
\newcommand\mbb{\mathbb}
\newcommand\msf{\mathsf}
\newcommand\sM{\msf{M}}
\newcommand\f{\mbf{f}}
\newcommand\x{\mbf{x}}
\newcommand\bin{\mbf{b}}
\newcommand\bi{\mbf{i}}
\newcommand\bk{\mbf{k}}
\newcommand\bF{\mbf{F}}
\newcommand\bw{\mbf{w}}
\newcommand\bseq{\bw}
\newcommand\seq{w}
\newcommand\bgamma{\boldsymbol\gamma}
\newcommand\N{\mbb{N}}
\newcommand\K{\mbb{K}}
\newcommand\cG{\mcal{G}}
\newcommand\cH{\mcal{H}}
\newcommand\cB{\mcal{B}}
\newcommand\cK{\mcal{K}}
\newcommand\cT[1][]{
  {\ifthenelse{\equal{#1}{}}{\mcal{T}}{\mcal{T}_{\preceq #1}}}}
\newcommand\ceil[1]{\left\lceil#1\right\rceil}
\newcommand\floor[1]{\left\lfloor#1\right\rfloor}
\newcommand\cro[1]{\left[#1\right]}
\newcommand\pare[1]{\left(#1\right)}
\newcommand\acc[1]{\left\{#1\right\}}
\newcommand\sca[1]{\left\langle#1\right\rangle}
\newcommand\ISSAC{\textsc{ISSAC}\xspace}
\newcommand\gb{Gr\"obner basis\xspace}
\newcommand\gbs{Gr\"obner bases\xspace}
\newcommand\bb{border basis\xspace}
\newcommand\bbs{border bases\xspace}
\newcommand\BM{\textsc{BM}\xspace}
\newcommand\BMS{\textsc{BMS}\xspace}
\newcommand\sFGLM{\textsc{Scalar-FGLM}\xspace}
\newcommand\asFGLM{\textsc{Adaptive Scalar-FGLM}\xspace}
\newcommand\spFGLM{\textsc{Sparse-FGLM}\xspace}
\newcommand\AGbb{\textsc{AGbb}\xspace}
\newcommand\agbb{\textsc{Artinian Gorenstein \bbs}\xspace}
\newcommand\divalgo{\textsc{Polynomial Scalar-FGLM}\xspace}
\newcommand\Adivalgo{\textsc{Adaptive Polynomial Scalar-FGLM}\xspace}
\newcommand\bms{Berlekamp--Massey--Sakata\xspace}
\newcommand\bm{Berlekamp--Massey\xspace}
\newcommand\ie{\mbox{{i.e.}}\xspace}
\newcommand\resp{\mbox{resp.}\xspace}
\newcommand\wrt{\mbox{wrt.}\xspace}
\newcommand\Zero{\textcolor{gray}{0}}
\DeclareMathOperator\DRL{\textsc{drl}}
\DeclareMathOperator\LEX{\textsc{lex}}
\newcommand\adots{\mathinner{%
  \mkern1mu\raise1pt\hbox{.}%
  \mkern2mu\raise4pt\hbox{.}%
  \mkern2mu\raise7pt\vbox{\kern7pt\hbox{.}}\mkern1mu}}
\newcommand\lcm{\textsc{lcm}\xspace}
\DeclareMathOperator\LM{\textsc{lm}}
\DeclareMathOperator\LT{\textsc{lt}}
\DeclareMathOperator\LC{\textsc{lc}}
\DeclareMathOperator\Stabilize{Stabilize}
\DeclareMathOperator\Border{Border}
\DeclareMathOperator\Staircase{Staircase}
\DeclareMathOperator\supp{supp}
\DeclareMathOperator\LCM{\textsc{lcm}}
\DeclareMathOperator\NormalForm{NormalForm}
\DeclareMathOperator\NormalFormRightSide{NormalFormRightSide}
\DeclareMathOperator\NormalFormHigherPart{NormalFormHigherPart}
\newcommand\Card[1]{\#\,#1}
\begin{document}

\begin{frontmatter}



\title{Polynomial-Division-Based Algorithms for Computing Linear
  Recurrence Relations}


\author[label1]{J\'er\'emy Berthomieu\corref{cor1}}
\cortext[cor1]{Laboratoire d'Informatique de Paris~6,
  Sorbonne Universit\'e, bo\^ite courrier~169, 4~place
  Jussieu, F-75252 Paris Cedex~05, France.}
\ead{jeremy.berthomieu@lip6.fr}

\author[label2,label3]{Jean-Charles Faug\`ere}
\address[label1]{Sorbonne Universit\'e, \textsc{CNRS}, \textsc{LIP6},
  F-75005, Paris, France}
\address[label2]{CryptoNext Security}
\address[label3]{Sorbonne Universit\'e, \textsc{CNRS}, \textsc{INRIA},
  \textsc{LIP6},
  F-75005, Paris, France}
\ead{jcf@cryptonext-security.com, jean-charles.faugere@inria.fr}

\begin{abstract}
  Sparse polynomial interpolation, sparse linear system solving or
modular rational reconstruction are fundamental problems in Computer
Algebra. They come down to computing linear recurrence relations of a sequence
with the Berlekamp--Massey algorithm.
Likewise, sparse multivariate polynomial interpolation and multidimensional
cyclic code decoding require guessing linear recurrence relations of
a multivariate sequence.

Several algorithms solve this problem.
The so-called 
Berlekamp--Massey--Sakata algorithm (1988) uses
polynomial additions and shifts by a monomial.
The \textsc{Scalar-FGLM} algorithm (2015) relies on linear algebra
operations on a multi-Hankel matrix, a multivariate generalization of
a Hankel matrix.
The Artinian Gorenstein border basis algorithm (2017) uses a
Gram-Schmidt process.

We propose a new algorithm for computing the Gr\"obner basis of the
ideal of 
relations of a sequence based
solely on
multivariate polynomial arithmetic. This algorithm allows us to
both revisit the Berlekamp--Massey--Sakata algorithm through
the use of polynomial divisions and to completely revise
the \textsc{Scalar-FGLM} algorithm without linear algebra operations.

A key observation in the design of this algorithm is to work on the
mirror of the truncated generating series 
allowing us
to
use polynomial arithmetic modulo a 
monomial ideal. It appears to have some similarities with
Pad\'e approximants of this mirror polynomial.

As an addition from the paper published at the \textsc{ISSAC}
conference,
we give an adaptive variant of this algorithm taking into
account the shape of the final Gr\"obner basis gradually as it is
discovered. The main advantage of this algorithm is that its
complexity in terms of operations and sequence queries only depends on
the output Gr\"obner basis.

All these algorithms have been implemented in \textsc{Maple} and we
report on our comparisons.


\end{abstract}



\begin{keyword}
  Gr\"obner bases; linear recursive sequences;
  \textsc{Berlekamp--Massey--Sakata}; extended
  Euclidean algorithm; Pad\'e approximants
\end{keyword}

\end{frontmatter}


\section{Introduction}
\label{s:intro}
The \bm algorithm (\BM), introduced by Berlekamp in 1968~\cite{Berl68}
and Massey in 1969~\cite{Mass69} is a fundamental algorithm in
Coding Theory, \cite{BoseRC1960,Hocquenghem1959}, and Computer
Algebra. It allows one to perform efficiently sparse 
polynomial interpolation, sparse linear system solving or modular rational
reconstruction.

In 1988, Sakata extended the \BM algorithm to dimension $n$. This
algorithm, known as the \bms algorithm (\BMS),
can be used to compute a \gb of the zero-dimensional ideal of the
relations satisfied by a sequence,
\cite{Sakata88,Sakata90,Sakata09}. Analogously to dimension $1$, the
\BMS
algorithm allows one to decode cyclic codes in dimension $n>1$, an
extension of Reed--Solomon's codes. Furthermore, the latest versions
of the \spFGLM
algorithm rely heavily on the efficiency of the \BMS
algorithm to compute the change of ordering of a \gb, \cite{FM11,faugere:hal-00807540}.

\subsection{Related Work}
In dimension $1$, it is well known that the \BM algorithm can be seen
in a matrix form requiring to solve a linear Hankel
system of size $D$, the order of the recurrence, see~\cite{JoMa89}, or
the Levinson--Durbin method, \cite{Levinson47,Wiener49}.
If we let $\sM(D)$ be a cost
function for multiplying two polynomials of degree $D$, for instance
$\sM(D)\in O(D\,\log D\,\log\log D)$, \cite{CantorK91,CooleyT65}, then
solving a linear Hankel system of size $D$ comes down to
performing a truncated extended Euclidean algorithm called on two
polynomials of
degree $D$, \cite{Blackburn1997,BrentGY1980,Dornstetter1987}.
More precisely, it can be done in $O(\sM(D)\,\log D)$
operations.

In~\cite{issac2015,berthomieu:hal-01253934}, the authors present the \sFGLM
algorithm, extending the matrix version of the \BM algorithm for
multidimensional sequences. It consists in computing the relations of the
sequence through the computation of a maximal submatrix of full rank
of a \emph{multi-Hankel} matrix, a multivariate generalization of a Hankel
matrix. Then, it returns the minimal \gb $\cG$ of the ideal of relations
satisfied by the sequence.
These notions are recalled in Section~\ref{s:notation}. 
If we denote by $S$ the staircase defined by
$\cG$ and $T$ the input set of monomials containing $S\cup\LM(\cG)$, then
the complexity of the \sFGLM algorithm is $O((\Card{T})^{\omega})$, where
$2\leq\omega\leq 3$ is the linear algebra exponent. However, we do
not know how to exploit the multi-Hankel structure to improve this
complexity. 

The \agbb algorithm (\AGbb) was presented
in~\cite{Mourrain:2017:FAB:3087604.3087632} for computing a \bb $\cB$
of the ideal of relations. It extends the algorithm
of~\cite{issac2015} using polynomial
arithmetic allowing it to reach the better complexity
$O((\Card{S}+\Card{\cB})\cdot\Card{S}\cdot\Card{T})$ with the above
notation.

Another viewpoint is that computing linear recurrence relations can be
seen as computing Pad\'e
approximants of a truncation of the generating series
$\sum_{i_1,\ldots,i_n\geq 0}w_{i_1,\ldots,i_n}\,x_1^{i_1}\,\cdots\,x_n^{i_n}$.
In~\cite{FitzpatrickF92}, the
authors extend the extended Euclidean algorithm for computing
multivariate Pad\'e approximants. Given a polynomial $P$ and an ideal $B$,
find polynomials $F$ and $C$ such that $P=\frac{F}{C}\bmod B$, where
the leading monomials of $F$ and $C$ satisfy some
constraints.

It is also worth noticing that we now know that both the \BMS and the
\sFGLM algorithms are not
equivalent, see~\cite{berthomieu:hal-01516708}, \ie it is not possible to
tweak one algorithm to mimic the behavior of the other. However, if
the input sequence is linear recurrent and sufficiently many sequence
terms are visited, then both algorithms compute a
\gb of the zero-dimensional ideal of relations.

\subsection{Contributions}
In the whole paper, we assume that the input sets of the \sFGLM
algorithm are the sets of all the monomials less than a given
monomial and that these sets are finite.
In order to improve the complexity of the algorithm, we will use
polynomial arithmetic in all the operations. Even though they are not
equivalent, this reduces the gap between the \BMS and the \sFGLM algorithms and
provides a unified presentation.

In Section~\ref{s:Mat2Pol}, we present the \BM, the \BMS and the \sFGLM
algorithms in a unified polynomial viewpoint. Using the mirror of
the truncated generating series is a key ingredient letting us
perform the computations modulo a specific monomial ideal $B$:
a vector in the kernel of a multi-Hankel matrix is a
polynomial $C$ such that
\begin{equation}\label{eq:intro_lt}
  \LM(P\,C\bmod B)\prec t_C,
\end{equation}
where $P$ is the mirror of the truncated
generating series, $\LM$ denotes the leading monomial and $t_C$ is a
monomial associated to $C$.

One interpretation of this is the computation of
multivariate Pad\'e approximants $\frac{F}{C}$
of $P$ modulo $B$ with different constraints than
in~\cite{FitzpatrickF92} since we require that $\LM(C)$ is in a given set
of terms and $\LM(P\,C \bmod B)$ satisfies equation~\eqref{eq:intro_lt}.

This polynomial point of view allows us to design the
\divalgo algorithm (Algorithm~\ref{algo:divalgo}) in
Section~\ref{s:division} based on
multivariate polynomial divisions. It computes, in a sense, a
generating set of polynomials whose
product with $P$ modulo $B$ must satisfy
equation~\eqref{eq:intro_lt}. If they do not, by polynomial divisions,
we make new ones until finding minimal polynomials satisfying this constraint.
It is worth noticing that in dimension $1$, we recover
the truncated extended Euclidean algorithm applied to the mirror
polynomial of the generating series  of the
input sequence, truncated in degree $D$,
and $x^{D+1}$. All the examples are available on~\cite{web}.

Our main result is Theorem~\ref{th:main}, a simplified version of
which is

\begin{theorem}
  Let $\bseq$ be a sequence, $\prec$ be a total degree monomial ordering and $a$ be
  a monomial. Let us
  assume that the reduced \gb $\cG$ of the ideal of relations of $\bseq$ for
  $\prec$ and its
  staircase $S$ satisfy
  $a\succeq\max(S\cup\LM(\cG))$ and for all $g\preceq a$,
  $s=\max_{\sigma\preceq a}\{\sigma,\
  \sigma\,g\preceq a\}$, we have $\max(S)\preceq s$.  
  Then, the
  \divalgo algorithm applied to $\bseq$, $\prec$ and $a$ terminates and computes a
  minimal \gb of the ideal of relations of
  $\bseq$ for $\prec$ in
  $O\big(\Card{S}\,(\Card{S}+\Card{\cG})\,\Card{\{\sigma,
      \sigma\preceq a\}}\big)$
  operations in the base field.
\end{theorem}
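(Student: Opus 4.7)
The plan is to establish correctness and then the complexity bound, each leveraging the polynomial-division viewpoint developed in Section~\ref{s:division} together with the characterization~\eqref{eq:intro_lt} of kernel elements of the multi-Hankel matrix via the mirror polynomial $P$ modulo the monomial ideal~$B$ generated from~$a$.

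For correctness, I would proceed by induction on the monomials $t \preceq a$ processed by \divalgo in increasing order of $\prec$. The invariant to maintain is that after treating $t$, the current candidates partition $\{\sigma \preceq t\}$ correctly into a (partial) staircase and a set of leading monomials of (partial) Gröbner basis elements, each of the latter being a true relation of $\bseq$. The first hypothesis $a \succeq \max(S \cup \LM(\cG))$ ensures that every element of $S$ and every leading monomial of $\cG$ is eventually reached by the main loop. The second hypothesis — that for every $g \preceq a$ the largest $\sigma$ with $\sigma\, g \preceq a$ dominates $\max(S)$ — is precisely what guarantees that when a candidate $C$ with $\LM(C)=t$ is tested, the polynomial division of $P\, C$ by~$B$ probes enough shifts of $\bseq$ to distinguish a genuine element of the ideal of relations from a spurious cancellation. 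Thus whenever the algorithm declares $\LM(P\, C \bmod B) \prec t$, $C$ is an actual relation, and conversely any true relation with leading monomial $t$ is detected; otherwise $t$ is correctly added to the staircase. Minimality of the output follows from always reducing a new candidate by the previously produced polynomials before classifying it.

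The main obstacle is the second step above: showing that the polynomial test $\LM(P\, C \bmod B) \prec t_C$ \emph{faithfully} reflects membership in the ideal of relations, given that we only see a window of $\bseq$. This requires carefully unwinding the correspondence between coefficients of $P\, C \bmod B$ and linear combinations of sequence terms, and showing that within the window controlled by $s = \max\{\sigma : \sigma\, g \preceq a\}$ the truncation cannot accidentally annihilate a non-relation. This is where the hypothesis on~$s$ is essential and where the mirror-polynomial formulation pays off, since it reduces the question to standard multivariate division modulo a monomial ideal.

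For the complexity, I would count the two nested sources of work. The outer loop treats at most $\Card{\{\sigma : \sigma \preceq a\}}$ monomials but only performs non-trivial work on the $\Card{S} + \Card{\cG}$ monomials that end up either in the staircase or as leading monomials of the output basis; this is the $(\Card{S} + \Card{\cG})$ factor. For each such monomial, the candidate polynomial is inter-reduced against the $\Card{S}$ previously established staircase polynomials via divisions in the quotient by $B$, and each such division manipulates polynomials supported on $\{\sigma : \sigma \preceq a\}$ at a cost of $O(\Card{\{\sigma : \sigma \preceq a\}})$ operations. Multiplying these three factors yields the announced bound $O\bigl(\Card{S}\,(\Card{S}+\Card{\cG})\,\Card{\{\sigma : \sigma \preceq a\}}\bigr)$. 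Termination is immediate from the first hypothesis, since the loop exhausts the finite set $\{\sigma : \sigma \preceq a\}$.
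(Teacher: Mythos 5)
Your proposal outlines a plausible direct induction but explicitly leaves unresolved what you yourself call ``the main obstacle'' --- showing that the finite-window polynomial test $\LM(\tilde{F}_m) \prec \frac{M}{b\,s}$ faithfully reflects membership in the ideal of relations --- and this gap is precisely where the paper supplies the key idea you are missing. Rather than reason directly from the mirror-polynomial formulation, the paper reduces to the known correctness and termination of the \bms algorithm: it shows, by induction on $\prec$, that the pairs $R_\mu = [F_\mu, C_\mu]$ produced by \divalgo are \emph{equivalent} to the candidate relations $C_\mu^{\star}$ constructed by \BMS, in the sense that both fail when shifted by exactly the same monomial or both succeed on the whole window. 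The sufficiency of the window (your second hypothesis) then combines with Proposition~\ref{prop:lt} to conclude that each surviving $C_g$ corresponds to a kernel vector of $H_{S,S\cup\{g\}}$, which is what actually certifies it as a true relation; you gesture at this but never close the argument.

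There is also a second ingredient your sketch does not address at all. It is not enough to know the window is large enough: one must also show that the specific $C_m$ the algorithm builds by normal-form computations has $\LM(F_m)$ (equivalently, $\LM(\tilde F_m)$) as small as possible among all polynomials with leading monomial $m$ and support below $m$ --- the polynomial analogue of a \BMS relation with ``largest fail''. Without this, the test could reject the algorithm's $C_m$ even though some other polynomial with the same leading monomial is a genuine relation, and $m$ would be wrongly inserted into the staircase. The paper obtains this minimality of $\LM(F_m)$ as part of the inductive equivalence with \BMS, by observing that the quotients in \NormalForm are chosen so as to minimize the leading monomial of $F_h$. Finally, your minimality claim (``reducing a new candidate by the previously produced polynomials'') argues for inter-reducedness, not minimality of the \gb in the technical sense that no $\LM(g)$ divides another; the paper instead derives minimality from maintaining $\Border(S)$ together with the \BMS failure analysis. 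Your complexity count and your termination argument are essentially correct and match the paper's.
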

Let us also remark that the complexity bound is based on naive
multivariate polynomial arithmetic and that this algorithm can benefit
from improvements made in this domain.

In applications such as \gbs change of orderings through the \spFGLM algorithm,
sequence queries are costly, \cite{faugere:hal-00807540}. In~\cite{issac2015}, an
adaptive variant of the \sFGLM algorithm was designed aiming to
minimize the number of sequence queries to recover the relations.

In Section~\ref{s:adaptive}, we show how we can transform the
\asFGLM algorithm of~\cite{issac2015} into an algorithm using
polynomial arithmetic. This algorithm is output sensitive and
probabilistic, like the \asFGLM algorithm is. That is, its main
advantage is that its complexity only depends on the sizes of the
computed staircase and \gb.

\begin{theorem}[see Theorem~\ref{th:Adivalgo}]
  Let $\bseq$ be a sequence, $\prec$ be a total degree monomial
  ordering.

  Let us assume that calling the \Adivalgo algorithm on $\bseq$ and
  $\prec$ yields the \gb $\cG$ and its staircase $S$.

  Then, the \Adivalgo algorithm performs at most
  $O((\Card{S}+\Card{\cG})^2\,\Card{2\,S})$ operations in the base
  field and $\Card{2\,(S\cup\LM(\cG))}$ table queries to recover
  $\cG$, where for a set $T$, $2\,T=\{t\,t', t,t'\in T\}$.
\end{theorem}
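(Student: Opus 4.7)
The plan is to bound separately the number of table queries and the number of arithmetic operations performed by \Adivalgo, reusing the strategy of the analysis of the non-adaptive \divalgo algorithm (Theorem~\ref{th:main}) but replacing the fixed monomial cone $\{\sigma \preceq a\}$ by the smaller sets $2\,S$ (for operations) and $2\,(S \cup \LM(\cG))$ (for queries) that the adaptive variant actually touches.

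First, I would handle the table-query count. In the adaptive setting, a sequence value $w_\bi$ is consulted only when evaluating the coefficient of $\bi$ in some product $P\,C \bmod B$, where $C$ is a candidate polynomial whose leading monomial lies (now or eventually) in $S \cup \LM(\cG)$ and whose remaining support also lies in $S \cup \LM(\cG)$. Writing $\bi = \sigma\,\tau$ with $\sigma$ in the support of $C$ and $\tau$ in the relevant support of the mirror polynomial $P$, both factors belong to $S \cup \LM(\cG)$, which immediately yields the bound $\Card{2\,(S \cup \LM(\cG))}$.

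Second, for the arithmetic bound, I would argue that at each outer iteration \Adivalgo either confirms a new staircase monomial or produces a new element of $\cG$, so there are at most $\Card{S} + \Card{\cG}$ iterations. Within one iteration, the algorithm performs a polynomial division or a linear combination of polynomials supported in $2\,S$, hence of size at most $\Card{2\,S}$, reducing against at most $\Card{S} + \Card{\cG}$ previously kept polynomials. This gives a per-iteration cost of $O\bigl((\Card{S}+\Card{\cG})\,\Card{2\,S}\bigr)$ and a total cost of $O\bigl((\Card{S}+\Card{\cG})^2\,\Card{2\,S}\bigr)$ operations, as claimed.

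The main obstacle is to show that all intermediate polynomial supports really do stay within $2\,S$, even though the algorithm discovers $S$ and $\cG$ only at the end. I would establish this by induction on the iteration index: the partial staircase $S'$ and partial basis $\cG'$ maintained by \Adivalgo satisfy $S' \subseteq S$ and $\LM(\cG') \subseteq \LM(\cG)$, and every polynomial stored at that iteration is supported in $2\,S'$. Correspondingly, the monomial ideal $B$ modulo which reductions are performed grows dynamically but is always contained in the ideal generated by the complement of $2\,S$. This invariant is the adaptive analogue of the key invariant used to prove Theorem~\ref{th:main}, and once it is in place the per-iteration operation bound and the per-query support bound aggregate to the claimed totals.
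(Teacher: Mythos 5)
Your proposal follows the same strategy as the paper's proof: bound the queried indices by $2\,(S\cup\LM(\cG))$ by noting that at every step the partial working set $S'$ is contained in the final $S\cup\LM(\cG)$, and bound the arithmetic work by multiplying the number of kept polynomials ($\Card{S}+\Card{\cG}$) by the per-polynomial reduction cost ($(\Card{S}+\Card{\cG})\,\Card{2\,S}$). One small imprecision to flag: you write that the partial staircase satisfies $S'\subseteq S$, but during the iteration that tests $m$ the working set is $S\cup\{m\}$ and $m$ may turn out to be a leading monomial of $\cG$ rather than a staircase element; the correct invariant (the one the paper uses) is $S'\subseteq S\cup\LM(\cG)$, with the support shrunk back into $2\,S$ once $m$ is classified. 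This does not affect the claimed bounds since $\Card{2\,(S\cup\{m\})}=O(\Card{2\,S})$, so the argument goes through once the invariant is stated accurately.
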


Finally, in Section~\ref{s:bench}, we compare the
\divalgo 
algorithm with our
implementations of the \BMS,
the \sFGLM and the \AGbb algorithms. Our algorithm performs always
fewer arithmetic operations than the others starting from a certain
size. Even for an example family favorable towards the \BMS
algorithm, our algorithm performs better.

Although we have compared the numbers of arithmetic operations, it
would be beneficial to have an efficient implementation. This would be
the first step into designing a similar efficient algorithm for
computing linear
recurrence relations with polynomial coefficients, extending the
Beckermann--Labahn algorithm~\cite{BeckermannL1994} for computing
multivariate Hermite--Pad\'e approximants.

Amongst the changes from the \ISSAC version of the paper,
\cite{issac2018}, the main additions are a complete redesign of the
\sFGLM algorithm through polynomial arithmetic in
Section~\ref{sss:sfglm} and a full description
of the \Adivalgo algorithm, an adaptive variant of the \divalgo
algorithm using polynomial divisions as well, in
Section~\ref{ss:adivalgo}. 
Generically, one could expect to make one relation
with leading monomial
$m\,x_i^2$ through a division of polynomials related to relations with
leading monomials $m$ and
$m\,x_i$. Yet, the naive approach given
in~\cite[Section~\ref{s:adaptive}]{issac2018} could not do so as it does not
perform any division. The 
\Adivalgo algorithm visits the monomials in the same order as the
\asFGLM algorithm to recover the relations and replaces any linear
algebra computations by polynomial ones, see~\cite{issac2015}.  We
also give the complexity of this algorithm in terms both of the number
of operations and the number of sequence queries.

Furthermore, one of the main obstructions to the design of this
adaptive variant is that at each step, some polynomials are
updated. This update process adds terms supposed to be small with
respect to the ordering. Yet,
their leading terms were not stable during this update process.

Lastly, in Section~\ref{s:Mat2Pol}, we now more clearly define what
$t_C$ should be in Equation~\ref{eq:intro_lt}. This is a key point in
the \divalgo algorithm and a more complete description is available in
Proposition~\ref{prop:lt}.


\section{Notation}
\label{s:notation}
We give a brief description of classical notation used in the paper.

\subsection{Sequences and relations}
For $n\geq 1$, we let $\bi=(i_1,\dots,i_n)\in\N^n$ and
for $\x=(x_1,\ldots,x_n)$, we write
$\x^{\bi}=x_1^{i_1}\,\cdots\,x_n^{i_n}$.
\begin{definition}
  Let $\K$ be a field, $\cK\subseteq\N^n$ be finite,
  $\bseq\in\K^{\N^n}$ be a $n$-dimensional sequence with terms in $\K$
  and
  $f=\sum_{\bk\in\cK}\gamma_\bk\,\x^\bk\in\K[\x]$.
  We let $\cro{f}_{\bseq}$, or $\cro{f}$, be the linear combination
  $\sum_{\bk\in\cK}\gamma_\bk \,\seq_\bk$.
  If for all
  $\bi\in\N^n$, $\cro{\x^{\bi}\,f}=0$, then we say that $f$ is the
  \emph{polynomial of the relation induced by
  $\bgamma=(\gamma_\bk)_{\bk\in\cK}\in\K^{\Card{\cK}}$}.
\end{definition}
The main benefit of the $[\,]$ notation resides in the immediate fact
that for any index $\bi$, its \emph{shift by $\x^{\bi}$} is
$\left[\x^\bi\,f\right]=\sum_{\bk\in\cK}
\gamma_\bk\,\seq_{\bk+\bi}$.

\begin{example}\label{ex:binom}
  Let $\bin=\left(\binom{i}{j}\right)_{(i,j)\in\N^2}$ be the sequence of the binomial
  coefficients.
  Then, $x\,y-y-1$ is the polynomial of Pascal's rule:
  \[\forall (i,j)\in\N^2,\, [x^i\,y^j\,(x\,y-y-1)]=
  \bin_{i+1,j+1}-\bin_{i,j+1}-\bin_{i,j}=0.\]
\end{example}

\begin{definition}[\cite{FitzpatrickN90,Sakata88}]~\label{def:lin_rec}
  Let $\bseq=(\seq_{\bi})_{\bi\in\N^n}$ be an $n$-dimensional sequence
  with coefficients in $\K$. The sequence $\bseq$ is \emph{linear
    recurrent} if from a nonzero finite number of initial terms
  $\{\seq_{\bi},\ \bi\in S\}$, and a finite number of relations, without any
  contradiction and without ambiguity, one can compute any term of
  the sequence.
  
  Equivalently, $\bseq$ is linear recurrent if $\{f\in\K[x],\
  \forall\,m\in\K[\x],\cro{m\,f}_{\bseq}=0\}\subseteq\K[x]$,  its ideal of relations, 
  is \emph{zero-dimensional}.
\end{definition}

As the input parameters of the algorithms are the first terms of a
sequence, a \emph{table} shall denote a finite subset of terms of a
sequence.

\subsection{\gbs}
Let $\cT=\{\x^{\bi},\ \bi\in\N^n\}$ be the set of all monomials
in $\K[\x]$.
A monomial ordering $\prec$ on $\K[\x]$ is an order relation
satisfying the following three classical properties:
\begin{enumerate}
\item for all $m\in\cT$, $1\preceq m$;
\item for all $m,m',s\in\cT$, $m\prec m'\Rightarrow m\,s\prec m'\,s$;
\item every subset of $\cT$ has a least element for $\prec$.
\end{enumerate}

For a monomial ordering $\prec$ on $\K[\x]$ and $f\in\K[\x]$, $f\neq 0$, the
\emph{leading monomial} of $f$, denoted $\LM(f)$, is the
greatest monomial in the support of $f$ for $\prec$. The
\emph{leading coefficient} of $f$, denoted $\LC(f)$, is the
nonzero coefficient of $\LM(f)$. The \emph{leading term} of $f$,
$\LT(f)$, is defined as $\LT(f)=\LC(f)\,\LM(f)$. For $f=0$, to
simplify the presentation, we shall define $\LM(f)=\LC(f)=\LT(f)=0$.
For an ideal $I$, we denote
$\LM(I)=\{\LM(f),\ f\in I\}$. Furthermore, we naturally extend $\prec$
to $\cT\cup\{0\}$ with $0\prec 1$.

We recall briefly the definition of a \gb and a staircase.
\begin{definition}\label{def:staircase}
  Let $I$ be a nonzero ideal of $\K[\x]$ and let $\prec$ be a monomial ordering.
  A set $\cG\subseteq I$ is a \emph{\gb} of $I$ if for all $f\in I$,
  there exists $g\in\cG$ such that $\LM(g)|\LM(f)$.

  A \gb $\cG$ of $I$ is \emph{minimal} if for any $g\in\cG$,
  $\langle\cG\setminus\{g\}\rangle\neq I$.

  Furthermore, $\cG$ is \emph{reduced} if for any $g,g'\in\cG$,
  $g\neq g'$ and any monomial $m\in\supp g'$, $\LM(g)\nmid m$.

  The \emph{staircase} of $\cG$
  is defined as
  $S=\Staircase(\cG)=\{s\in\cT,\ \forall\,g\in\cG, \LM(g)\nmid s\}$.
  It is also the canonical basis of $\K[\x]/I$.
\end{definition}

\gb theory allows us to choose any monomial ordering,
among which, we mainly use the
\begin{description}
\item[$\LEX(x_n\prec\cdots\prec x_1)$ ordering] which satisfies
  $\x^{\bi}\prec\x^{\bi'}$ if, and only if, there
  exists $k$, $1\leq k\leq n$, such that for all $\ell<k$,
  $i_{\ell}=i_{\ell}'$ and $i_k<i_k'$,
  see~\cite[Chapter~2, Definition~3]{CoxLOS2015};
\item[$\DRL(x_n\prec\cdots\prec x_1)$ ordering] which satisfies
  $\x^{\bi}\prec\x^{\bi'}$ if, and only if,
  $i_1+\cdots+i_n<i_1'+\cdots+i_n'$ or $i_1+\cdots+i_n=i_1'+\cdots+i_n'$
  and there exists $k$,
  $2\leq k\leq n$, such that for all $\ell>k$, $i_{\ell}=i_{\ell}'$
  and $i_k>i_k'$,
  see~\cite[Chapter~2, Definition~6]{CoxLOS2015}.
\end{description}

However, in the \BMS algorithm, we need to be able to enumerate all
the monomials up to a bound monomial. This forces the user to take an
ordering $\prec$
such that for all $M\in\cT$, the set $\cT[a]=\{m\preceq a,\ m\in\cT\}$
is finite. Such an ordering $\prec$ makes $(\N^n,\prec)$ isomorphic to
$(\N,<)$ as an ordered set. Hence, for a monomial $m$, it makes
sense to speak about the
previous (\resp next) monomial $m^-$ (\resp $m^+$)
for $\prec$. The $\DRL$ ordering
is an example for an ordering on which every term other than $1$ has an
immediate predecessor.

This request excludes for instance the $\LEX$ ordering, and more
generally any elimination ordering. In other words, only weighted
degree ordering, or \emph{weight ordering}, should be used.

Now that a monomial ordering is defined, we can say that
a relation given by a polynomial $f\in\K[\x]$ \emph{fails when
  shifted by $s$} if for all monomials $\sigma\prec s$, $[\sigma\,f]=0$ but
$[s\,f]\neq 0$, see also~\cite{Sakata90,Sakata09}.

\subsection{Multi-Hankel matrices}
A matrix $H\in\K^{m\times n}$ is \emph{Hankel}, if there exists a
sequence $\bseq=(\seq_i)_{i\in\N}$ such that for all
$(i,i')\in\{1,\ldots,m\}\times\{1,\ldots,n\}$, the
coefficient $h_{i,i'}$ lying on the $i$th row and $i'$th column of $H$
satisfies $h_{i,i'}=\seq_{i+i'}$.

In a multivariate setting, we can extend this notion
to \emph{multi-Hankel} matrices. For two sets of monomials $U$ and $T$, we let
$H_{U,T}$ be the multi-Hankel matrix with rows (\resp columns) indexed
with $U$ (\resp $T$) so that the
coefficient of $H_{U,T}$ lying on the row labeled with
$\x^{\bi}\in U$ and column labeled with $\x^{\bi'}\in T$ is
$\seq_{\bi+\bi'}$. 
\begin{example}
  Let $\bseq=(\seq_{i,j,k})_{(i,j,k)\in\N^3}$ be a sequence.
  \begin{enumerate}
  \item For $U=\{1,z,z^2,y,y\,z,y\,z^2\}=\{1,z,z^2\}\,\cup\,y\,\{1,z,z^2\}$ and
    $T=\{1,z,y,y\,z,y^2,y^2\,z\}=\{1,z\}\,\cup\,y\,\{1,z\}\,\cup\,y^2\,\{1,z\}$,
    ordered for $\LEX(z\prec y\prec x)$,
    \[H_{U,T}=\kbordermatrix{
      		&1		&z		&		
                &y              &y\,z		&
		&y^2		&y^2\,z\\
      1		&\seq_{0,0,0}	&\seq_{0,0,1}	&\vrule	
      		&\seq_{0,1,0}	&\seq_{0,1,1}	&\vrule
		&\seq_{0,2,0}	&\seq_{0,2,1}\\
      z		&\seq_{0,0,1}	&\seq_{0,0,2}	&\vrule
		&\seq_{0,1,1}	&\seq_{0,1,2}	&\vrule
		&\seq_{0,2,1}	&\seq_{0,2,2}\\
      z^2	&\seq_{0,0,2}	&\seq_{0,0,3} 	&\vrule
      		&\seq_{0,1,2}	&\seq_{0,1,3}	&\vrule
		&\seq_{0,2,2}	&\seq_{0,2,3}\\
                \hhline{~*{3}{---|}}
      y		&\seq_{0,1,0}	&\seq_{0,1,1} 	&\vrule
      		&\seq_{0,2,0}	&\seq_{0,2,1}	&\vrule
		&\seq_{0,3,0}	&\seq_{0,3,1}\\
      y\,z	&\seq_{0,1,1}	&\seq_{0,1,2} 	&\vrule
      		&\seq_{0,2,1}	&\seq_{0,2,2}	&\vrule
		&\seq_{0,3,1}	&\seq_{0,3,2}\\
      y\,z^2	&\seq_{0,1,2}	&\seq_{0,1,3} 	&\vrule
      		&\seq_{0,2,2}	&\seq_{0,2,3}	&\vrule
		&\seq_{0,3,2}	&\seq_{0,3,3}\\
    }
    \]
    is a $2\times 3$-block-Hankel
    matrix with $3\times 2$-Hankel blocks.
  \item For
    $\tilde{U}=
    U\,\cup\,x\,U\,\cup\,x^2\,U\,\cup\,x^3\,U$ and
    $\tilde{T}=
    T\,\cup\,x\,T\,\cup\,x^2\,T\,\cup\,x^3\,T\,\cup\,x^4\,T$, also
    ordered for $\LEX(z\prec y\prec x)$,
    \[H_{\tilde{U},\tilde{T}}=\kbordermatrix{
      		&T		&		
                &x\,T		&
		&x^2\,T		&
		&x^3\,T		&
		&x^4\,T\\
      U		&H_{U,T}		&\vrule\vrule
      		&H_{U,x\,T}	&\vrule\vrule
      		&H_{U,x^2\,T}	&\vrule\vrule
      		&H_{U,x^3\,T}	&\vrule\vrule
		&H_{U,x^4\,T}\\
                \hhline{~*{4}{---|}}
                \hhline{~*{4}{---|}}
      x\,U	&H_{x\,U,T}	&\vrule\vrule
      		&H_{x\,U,x\,T}	&\vrule\vrule
      		&H_{x\,U,x^2\,T}	&\vrule\vrule
      		&H_{x\,U,x^3\,T}	&\vrule\vrule
		&H_{x\,U,x^4\,T}\\
                \hhline{~*{4}{---|}}
                \hhline{~*{4}{---|}}
      x^2\,U	&H_{x^2\,U,T}	&\vrule\vrule
      		&H_{x^2\,U,x\,T}	&\vrule\vrule
      		&H_{x^2\,U,x^2\,T}	&\vrule\vrule
      		&H_{x^2\,U,x^3\,T}	&\vrule\vrule
		&H_{x^2\,U,x^4\,T}\\
                \hhline{~*{4}{---|}}
                \hhline{~*{4}{---|}}
      x^3\,U	&H_{x^3\,U,T}	&\vrule\vrule
      		&H_{x^3\,U,x\,T}	&\vrule\vrule
      		&H_{x^3\,U,x^2\,T}	&\vrule\vrule
      		&H_{x^3\,U,x^3\,T}	&\vrule\vrule
		&H_{x^3\,U,x^4\,T}\\
    },
    \]
    where $H_{x^i\,U,x^{i'}\,T}=H_{x^{i+i'}\,U,T}$ for any $i,i'$ since
    $H_{x^i\,U,x^{i'}\,T}$ is the same matrix as $H_{U,T}$ where each
    coefficient $w_{0,j,k}$ has been replaced by $w_{i+i',j,k}$. Therefore,
    $H_{\tilde{U},\tilde{T}}$ is a $4\times 5$-\emph{block-Hankel}
    matrix with $6\times 6$-multi-Hankel blocks like $H_{U,T}$.
  \item For $T=\{1,y,x,y^2,x\,y,x^2\}$, ordered for $\DRL(z\prec
    y\prec x)$, $H_{T,T}$ is a multi-Hankel matrix whose structure is
    less clear. It can be
    considered as a block-Hankel matrix with blocks of different sizes,
    noticing that
    $T=\acc{1}\,\cup\,y\,\acc{1,\frac{x}{y}}\,\cup\,y^2\,\acc{1,\frac{x}{y},
      \frac{x^2}{y^2}}$,
    \[H_{T,T}=\kbordermatrix{
      		&1              &
                &y		&x		&
                &y^2            &x\,y		&x^2\\
      1		&\seq_{0,0,0}	&\vrule
      		&\seq_{0,1,0}	&\seq_{1,0,0}	&\vrule
      		&\seq_{0,2,0}	&\seq_{1,1,0} 	&\seq_{2,0,0}\\
                \hhline{~*{3}{---|}}
      y		&\seq_{0,1,0}	&\vrule
      		&\seq_{0,2,0}	&\seq_{1,1,0}	&\vrule
      		&\seq_{0,3,0}	&\seq_{1,2,0}	&\seq_{2,1,0}\\
      x		&\seq_{1,0,0}	&\vrule
      		&\seq_{1,1,0}	&\seq_{2,0,0}	&\vrule
      		&\seq_{1,2,0}	&\seq_{2,1,0}	&\seq_{3,0,0}\\
                \hhline{~*{3}{---|}}
      y^2	&\seq_{0,2,0}	&\vrule
      		&\seq_{0,3,0}	&\seq_{1,2,0}	&\vrule
      		&\seq_{0,4,0}	&\seq_{1,3,0}	&\seq_{2,2,0}\\
      x\,y	&\seq_{1,1,0}	&\vrule
      		&\seq_{1,2,0}	&\seq_{2,1,0}	&\vrule
      		&\seq_{1,3,0}	&\seq_{2,2,0}	&\seq_{3,1,0}\\
      x^2	&\seq_{2,0,0}	&\vrule
      		&\seq_{2,1,0}	&\seq_{3,0,0}	&\vrule
      		&\seq_{2,2,0}	&\seq_{3,1,0}	&\seq_{4,0,0}
    }.
    \]
  \item For
    $U=\{1,z,y,x,z^2,y\,z,x\,z,y^2,x\,y,x^2\}=
    \acc{1}\,\cup\,z\,\acc{1,\frac{y}{z},\frac{x}{z}}\,\cup\,
    z^2\,\acc{1,\frac{y}{z},\frac{x}{z},\frac{y^2}{z^2},\frac{x\,y}{z^2},
      \frac{x^2}{z^2}}$, also
    ordered for $\DRL(z\prec y\prec x)$, the matrix $H_{U,U}$ can be
    seen as a block-matrix like $H_{T,T}$ except each block is a
    multi-Hankel matrix in two variables. In fact, the bottom-right
    block would be the same as $H_{T,T}$ where each coefficient
    $w_{i,j,0}$ is replaced by $w_{i,j,4-i-j}$.
  \end{enumerate}
\end{example}

\subsection{Polynomials associated to multi-Hankel matrices}
\label{ss:pol}
For two sets of terms $T$ and $U$, we let $T+U$ denote their Minkowski sum, \ie
$T+U=\{t\,u,\ t\in T, u\in U\}$, and $2\,T=T+T$.

For a set of terms $T$, we let $M=\LCM(T)$. We
let $P_T$ be the mirror polynomial of the
truncated generating series of a sequence $\bseq$, \ie
\[P_T=\sum_{t\in T}[t]\,\frac{M}{t}.\]
\begin{example}
  Let $\bseq=(\seq_{i,j,k})_{(i,j,k)\in\N^3}$ be a sequence and
  $T=\{1,z,y,x,z^2,y\,z\}$, then $M=x\,y\,z^2$ and
  \begin{align*}
  P_T&=[1]\,x\,y\,z^2+[z]\,x\,y\,z+[y]\,x\,z^2+[x]\,y\,z^2+[z^2]\,x\,y+[y\,z]\,x\,z\\
    &=\seq_{0,0,0}\,x\,y\,z^2+\seq_{0,0,1}\,x\,y\,z+\seq_{0,1,0}\,x\,z^2
      +\seq_{1,0,0}\,y\,z^2+\seq_{0,0,2}\,x\,y+\seq_{0,1,1}\,x\,z.
  \end{align*}
\end{example}
In this paper, we will mostly deal with polynomials
$P_{T+U}$ as there is a strong connection between $H_{U,T}$
and $P_{T+U}$.

Finally, letting $M=\lcm(T+U)=x_1^{D_1}\,\cdots\,x_n^{D_n}$ and $B$ be
the monomial ideal $\big(x_1^{D_1+1},\ldots,x_n^{D_n+1}\big)$,
we will use pairs of multivariate polynomials $R_m=[F_m,C_m]$.
If $m\in B$, then we set $F_m=m$ and $C_m=0$. Otherwise, 
$\LM(C_m)=m$ and $F_m=P_{T+U}\,C_m\bmod B$.


\section{From matrices to polynomials}
\label{s:Mat2Pol}
Before detailing the unified polynomial viewpoint, we recall the
linear algebra viewpoint of the \BM, the \BMS and the \sFGLM algorithms.

\subsection{The \BM algorithm}
\label{ss:bm}
Let $\bseq=(\seq_i)_{i\in\N}$ be a one-dimensional table.
Classically, when calling the \BM algorithm, one does not know in
advance the order of the output relation. Therefore, from a matrix
viewpoint, one wants to compute the greatest collection of vectors
\[\pare{
  \begin{smallmatrix}
    \gamma_1\\\vdots\\\gamma_{x^{d-1}}\\1\\0\\0\\\vdots\\0
  \end{smallmatrix}},\pare{
  \begin{smallmatrix}
    0\\\gamma_1\\\vdots\\\gamma_{x^{d-1}}\\1\\0\\\vdots\\0
  \end{smallmatrix}},\ldots,\pare{
  \begin{smallmatrix}
    0\\0\\\vdots\\0\\\gamma_1\\\vdots\\\gamma_{x^{d-1}}\\1
  \end{smallmatrix}}
\]
in the kernel of
$H_{\{1\},\{1,\ldots,x^D\}}=\kbordermatrix{
  	&1	&\cdots	&x^D\\
    1	&\seq_0	&\cdots	&\seq_D
}$, that is $\gamma_1,\ldots,\gamma_{x^{d-1}}\in\K$ such that the relation
$[C_{x^d}]=\seq_d+\sum_{k=0}^{d-1}\gamma_{x^k}\,\seq_k$ and its shifts,
$[x\,C_{x^d}],\ldots,\allowbreak [x^{D-d}\,C_{x^d}]$, are all $0$. 
Equivalently, we look for the least $d$ such that
$H_{\cT[x^{D-d}],\cT[x^d]}\,\pare{
  \begin{smallmatrix}
    \gamma_1\\\vdots\\\gamma_{x^{d-1}}\\1
  \end{smallmatrix}}=0$.

This Hankel matrix-vector product can be extended into
\begin{equation}
  \begin{pmatrix}
    \seq_0	&\cdots		&\seq_{d-1}	&\seq_d\\
    \seq_1	&\cdots		&\seq_d		&\seq_{d+1}\\
    \vdots	&		&\vdots		&\vdots\\
    \seq_{D-d}	&\cdots		&\seq_{D-1}	&\seq_D\\
    \seq_{D-d+1}	&\cdots		&\seq_D		&\Zero\\
    \vdots	&\iddots	&\iddots	&\vdots\\
    \seq_D	&\Zero		&\cdots		&\Zero
  \end{pmatrix}\,
  \begin{pmatrix}
    \gamma_1\\\vdots\\\gamma_{x^{d-1}}\\1
  \end{pmatrix}=
  \begin{pmatrix}
    0\\0\\\vdots\\0\\f_{x^{d-1}}\\\vdots\\f_1
  \end{pmatrix},\label{eq:extendedHankel}
\end{equation}
representing the product of polynomials $P_{\cT[x^D]}=\sum_{i=0}^D
\seq_i\,x^{D-i}$ and $C_{x^d}=x^d+\sum_{k=0}^{d-1}\gamma_{x^k}\,x^k$ modulo
$B=x^{D+1}$. The requirement for $C_{x^d}$ to encode a valid relation is now
that $\LM(F_{x^d})\prec x^d$ with $F_{x^d}=P_{\cT[x^D]}\,C_{x^d} \bmod B$.

This viewpoint gives rise to the following version of the
\BM algorithm: Start with $R_B=[F_B,C_B]=[B,0]$ and $B=x^{D+1}$, and
$R_1=[F_1,C_1]=[P_{\cT[x^D]},1]$.
Compute the quotient $Q$ of the Euclidean division of $F_B=B$ by
$F_1$ and then compute
$R_{\LM(Q)}=R_B-Q\,R_1=[F_B-Q\,F_1,C_B-Q\,C_1]=[F_{\LM(Q)},C_{\LM(Q)}]$.
Repeat
with $R_1$ and $R_{\LM(Q)}$ until reaching
a pair $R_{x^d}=[P_{\cT[x^D]}\,C_{x^d}\bmod
B,C_{x^d}]=\allowbreak[F_{x^d},C_{x^d}]$  with $\LM(C_{x^d})=x^d$
and $\LM(F_{x^d})\prec x^d$. This is in fact the extended Euclidean
algorithm called on $B=x^{D+1}$ and $F_1$ without any computation of the
B\'ezout's cofactors of $x^{D+1}$.
\begin{example}
  Let us consider the Fibonacci table $\bF=(F_i)_{i\in\N}$ with
  $F_0=F_1=1$ and assume $D=5$.
  On the one hand, although the kernel of
  \[H_{\{1\},\{1,x,x^2,x^3,x^4,x^5\}}=\kbordermatrix{
    	&1	&x	&x^2	&x^3	&x^4	&x^5\\
    1	&1	&1	&2	&3	&5	&8
  }\]
  has dimension $5$ and $\pare{
    \begin{smallmatrix}
      -1\\1\\0\\0\\0\\0
    \end{smallmatrix}}$ is in this
  kernel, it corresponds to $[x-1]=0$, its shifted vectors
  $\pare{
    \begin{smallmatrix}
      0\\-1\\1\\0\\0\\0
    \end{smallmatrix}},
  \pare{
    \begin{smallmatrix}
      0\\0\\-1\\1\\0\\0
    \end{smallmatrix}},
  \pare{
    \begin{smallmatrix}
      0\\0\\0\\-1\\1\\0
    \end{smallmatrix}},
  \pare{
    \begin{smallmatrix}
      0\\0\\0\\0\\-1\\1
    \end{smallmatrix}}$ are not, as they correspond to
  $[x^i\,(x-1)]\neq 0$, for $1\leq
  i\leq 4$. However, these vectors
  \[
  \begin{pmatrix}
    -1\\-1\\1\\0\\0\\0
  \end{pmatrix},
  \begin{pmatrix}
    0\\-1\\-1\\1\\0\\0
  \end{pmatrix},
  \begin{pmatrix}
    0\\0\\-1\\-1\\1\\0
  \end{pmatrix},
  \begin{pmatrix}
    0\\0\\0\\-1\\-1\\1
  \end{pmatrix}\]
  are in the kernel and form the greatest collection of shifted
  vectors as such. They correspond to $[x^i\,(x^2-x-1)]=0$,
  for $0\leq i\leq 3$.
  Finally, we have
  \[
  \begin{pmatrix}
    1	&1	&2\\
    1	&2	&3\\
    2	&3	&5\\
    3	&5	&8\\
    5	&8	&\Zero\\
    8	&\Zero	&\Zero
  \end{pmatrix}\,
  \begin{pmatrix}
    -1\\-1\\1
  \end{pmatrix}=
  \begin{pmatrix}
    0\\0\\0\\0\\-13\\-8
  \end{pmatrix},
  \]
  where the gray zeroes ($\Zero$) are due to the matrix extension and
  not the sequence itself.

  On the other hand, $B=x^6$, $R_B=[B,0]$
  and $R_1=[x^5+x^4+2\,x^3+3\,x^2+5\,x+8,1]$. As we can see
  $R_1=[F_1,C_1]$ with $\LM(C_1)=1$ and $\LM(F_1)=x^5\succeq 1$.

  The first step of the
  extended Euclidean algorithm yields
  $R_x=[x^4+x^3+2\,x^2+3\,x-8,x-1]=[F_x,C_x]$ with $\LM(C_x)=x$ and
  $\LM(F_x)=x^4\succeq x$.

  Then, the second step yields
  $R_{x^2}=[-13\,x-8,x^2-x-1]=[F_{x^2},C_{x^2}]$ with $\LM(C_{x^2})=x^2$
  and $\LM(F_{x^2})=x\prec x^2$ so $C_{x^2}$ is a valid relation. We 
  return $C_{x^2}$.
\end{example}

\begin{remark}
  The \BM algorithm always returns a non-zero relation. If no pair
  $R_{x^{\delta}}=[F_{x^{\delta}},C_{x^{\delta}}]$ satisfies the requirements, then
  it will return a pair $R_{x^d}$ with $\LM(C_{x^d})\succ x^D$.
  From a matrix viewpoint, it returns an element of the kernel of the
  empty matrix $H_{\emptyset,\cT[x^d]}$.
\end{remark}
\subsection{Multidimensional extension}
\label{ss:multidim}
In this section, we show how to extend Section~\ref{ss:bm} to
multidimensional sequences. Subsection~\ref{sss:bms} corresponds to the
\BMS algorithm. We shall see that this extension is the closest to the
\BM algorithm. Then, Subsection~\ref{sss:sfglm} corresponds to the
\sFGLM, which, in some way, is more general.
\subsubsection{The \BMS algorithm}
\label{sss:bms}
For a multidimensional table $\bseq=(\seq_{\bi})_{\bi\in\N^n}$, the \BMS
algorithm extends the \BM algorithm by computing vectors in the kernel of a
multi-Hankel matrix
\[H_{\{1\},\cT[a]}=\kbordermatrix{
  	&1	&\cdots	&a^-	&a\\
    1	&[1]	&\cdots	&[a^-]	&[a]
}\]
corresponding to having relations $[C_g]=0$, with
$\LM(C_g)=g$ minimal for the division and for all $t$ such that
$t\,g\preceq a$, $[t\,C_g]=0$ as well.
This also comes down to finding the least
(for the partial order $|$)
monomials $g_1,\ldots,g_r\preceq a$ such that $\dim\ker
H_{\cT[s_k],\cT[g_k]}>0$ with $s_k$ the greatest monomial such that
$s_k\,g_k\preceq a$ for all $k$, $1\leq k\leq r$. Then, each
multi-Hankel matrix-vector product can be extended further as in
equation~\eqref{eq:extendedHankel}, taking the multi-Hankel matrix
$H_{\cT[a],\cT[g_k]}$ and setting to zero any sequence term $[t\,u]$
with $t\,u\notin\cT[a]$.
\begin{equation}
  \begin{pmatrix}
    [1]		&\cdots	&[g_k^-]	&[g_k]\\
    [1^+]	&\cdots	&[1^+\,g_k^-]	&[1^+\,g_k]\\
    \vdots	&	&\vdots		&\vdots\\
    [s_k]	&\cdots	&[s_k\,g_k^-]	&[s_k\,g_k]\\
    [s_k^+]	&\cdots&[s_k^+\,g_k^-]	&\Zero\\
    \vdots	&	&\iddots	&\vdots\\
    [a]		&\Zero	&\cdots		&\Zero
  \end{pmatrix}\,
  \begin{pmatrix}
    \gamma_1\\\vdots\\\gamma_{g_k^-}\\1
  \end{pmatrix}=
  \begin{pmatrix}
    0\\0\\\vdots\\0\\f_{M/s_k^+}\\\vdots\\f_{M/a}
  \end{pmatrix},\label{eq:extendedMultiHankel}
\end{equation}
where $M=\lcm(\cT[a])=x_1^{D_1}\,\cdots\,x_n^{D_n}$. It then represents
the product of polynomials $P_{\cT[a]}=\sum_{t\preceq a}
[t]\,\frac{M}{t}$ and $C_{g_k}=g_k+\sum_{t\prec g_k}\gamma_t\,t$ modulo
$B=(x_1^{D_1+1},\ldots,x_n^{D_n+1})$.
The requirement for $C_{g_k}$ to encode a valid relation is now
that $\LM(F_{g_k})\prec\frac{M}{s_k}$ with $F_{g_k}=P_{\cT[a]}\,C_{g_k} \bmod B$.

Let us notice that $[s_k^+\,g_k^-]$ can also be a $\Zero$
if $s_k^+\,g_k^-\succ a$ and that, more generally, the gray zeroes need not be
diagonally aligned like they are in the univariate case. This is
illustrated by the following example.
\begin{example}
  Let us consider the binomial table $\bin=\pare{\binom{i}{j}}_{(i,j)\in\N^2}$ with
  $\DRL(y\prec x)$ and assume $a=x^2\,y$.
  The kernel of
  \[H_{\{1\},\cT[x^2\,y]}=\kbordermatrix{
    	&1	&y	&x	&y^2	&x\,y	&x^2
        &y^3	&x\,y^2	&x^2\,y\\
    1	&1	&0	&1	&0	&1	&1
    	&0	&0	&2
  }\]
  has clearly dimension $8$. Two vectors in the kernel, $\pare{
    \begin{smallmatrix}
      0\\1\\0\\0\\0\\0\\0\\0\\0
    \end{smallmatrix}}$ and
  $\pare{
    \begin{smallmatrix}
      -1\\\alpha\\1\\0\\0\\0\\0\\0\\0
    \end{smallmatrix}}$, with $\alpha$ any number in $\K$, correspond to the
  independent relations $[y]=0$ and $[x+\alpha\,y-1]=0$.
  However, not all their shifts belong to the kernel. For the former,
  $\pare{\begin{smallmatrix}
      0\\0\\0\\1\\0\\0\\0\\0\\0
    \end{smallmatrix}}$, corresponding to $[y^2]=0$, belongs to the
  kernel but $\pare{\begin{smallmatrix}
      0\\0\\0\\0\\1\\0\\0\\0\\0
    \end{smallmatrix}}$ does not, as $[x\,y]\neq 0$. For the latter,
  $\pare{\begin{smallmatrix}
      0\\-1\\0\\\alpha\\1\\0\\0\\0\\0
    \end{smallmatrix}}$ does not belong to the kernel, as
  $[x\,y+\alpha\,y^2-y]\neq 0$, whatever $\alpha$ is.
  
  Therefore, the vectors in the kernel that we seek must correspond
  to relations $[C_g]=0$ with $g\in\{y^2,x\,y,x^2,\ldots,x^2\,y\}$.
  
  In fact, as $[m\,y^2]=0$ for $1\preceq m\preceq x$, the vectors
  $\pare{
    \begin{smallmatrix}
      0\\0\\0\\1\\0\\0\\0\\0\\0
    \end{smallmatrix}},\pare{
    \begin{smallmatrix}
      0\\0\\0\\0\\0\\0\\1\\0\\0
    \end{smallmatrix}},\pare{
    \begin{smallmatrix}
      0\\0\\0\\0\\0\\0\\0\\1\\0
    \end{smallmatrix}}$,
  are in the kernel and fulfill the requirements. We can indeed notice that
  \[H_{\cT[x],\cT[y^2]}\,
  \begin{pmatrix}
    0\\0\\0\\1
  \end{pmatrix}=
  \kbordermatrix{
    	&1	&y	&x	&y^2\\
    1	&1	&0	&1	&0\\
    y	&0	&0	&1	&0\\
    x	&1	&1	&1	&0
  }\,
  \begin{pmatrix}
    0\\0\\0\\1
  \end{pmatrix}=
  \begin{pmatrix}
    0\\0\\0
  \end{pmatrix},\quad
  \kbordermatrix{
    	&1	&y	&x	&y^2\\
    1	&1	&0	&1	&0\\
    y	&0	&0	&1	&0\\
    x	&1	&1	&1	&0\\
    y^2	&0	&0	&0	&\Zero\\
    x\,y&1	&0	&2	&\Zero\\
    x^2	&1	&2	&\Zero	&\Zero\\
    y^3	&0	&\Zero	&\Zero	&\Zero\\
    x\,y^2&0	&\Zero	&\Zero	&\Zero\\
    x^2\,y&2	&\Zero	&\Zero	&\Zero
  }\,
  \begin{pmatrix}
    0\\0\\0\\1
  \end{pmatrix}=
  \begin{pmatrix}
    0\\0\\0\\0\\0\\0\\0\\0\\0
  \end{pmatrix},
  \]
  where the gray zeroes ($\Zero$) are due to the matrix extension and
  not the binomial sequence itself. From the polynomial point of view,
  \begin{align*}
    F_{y^2}	&=P_{\cT[x^2\,y]}\,C_{y^2}\bmod B\\
    		&=(x^2\,y^3+x\,y^3+x\,y^2+y^3+2\,y^2)\,y^2\bmod
                  (x^3,y^4)\\
    		&=0.
  \end{align*}

  Likewise, the
  vectors
  $\pare{
    \begin{smallmatrix}
      -(1+\alpha)\\\beta\\\alpha\\0\\1\\0\\0\\0\\0
    \end{smallmatrix}},\pare{
    \begin{smallmatrix}
      0\\-(1+\alpha)\\0\\\beta\\\alpha\\0\\0\\1\\0
    \end{smallmatrix}},\pare{
    \begin{smallmatrix}
      0\\0\\-(1+\alpha)\\0\\\beta\\\alpha\\0\\0\\1
    \end{smallmatrix}}$,
  are in the kernel for $\alpha=0$ and $\beta=-1$, as they
  correspond to $[m\,(x\,y+\alpha\,x+\beta\,y-(1+\alpha))]=[m\,(x\,y-y-1)]=0$, for
  $1\preceq m\preceq x$. Furthermore,
  \[\kbordermatrix{
    	&1	&y	&x	&y^2	&x\,y\\
    1	&1	&0	&1	&0	&1\\
    y	&0	&0	&1	&0	&0\\
    x	&1	&1	&1	&0	&2\\
    y^2	&0	&0	&0	&\Zero	&\Zero\\
    x\,y&1	&0	&2	&\Zero	&\Zero\\
    x^2	&1	&2	&\Zero	&\Zero	&\Zero\\
    y^3	&0	&\Zero	&\Zero	&\Zero	&\Zero\\
    x\,y^2&0	&\Zero	&\Zero	&\Zero	&\Zero\\
    x^2\,y&2	&\Zero	&\Zero	&\Zero	&\Zero
  }\,
  \begin{pmatrix}
    -1\\-1\\0\\0\\1
  \end{pmatrix}=
  \begin{pmatrix}
    0\\0\\0\\0\\-1\\-3\\0\\0\\-2
  \end{pmatrix},
  \]
  From the polynomial point of view, $F_{x\,y}=P_{\cT[x^2\,y]}\,\,(x\,y-y-1)\bmod
  (x^3,y^4)=-x\,y^2-3\,y^3-2\,y^2$.

  Finally, the
  vectors
  $\pare{
    \begin{smallmatrix}
      -(1+\alpha)\\\beta\\\alpha\\0\\0\\1\\0\\0\\0
    \end{smallmatrix}},\pare{
    \begin{smallmatrix}
      0\\-(1+\alpha)\\0\\\beta\\\alpha\\0\\0\\0\\1
    \end{smallmatrix}}$,
  are in the kernel for $\alpha=-2$ and $\beta=0$, as they
  correspond to $[m\,(x^2+\alpha\,x+\beta\,y-(1+\alpha))]=[m\,(x^2-2\,x+1)]=0$, for
  $1\preceq m\preceq y$. Furthermore,
  \[\kbordermatrix{
    	&1	&y	&x	&y^2	&x\,y	&x^2\\
    1	&1	&0	&1	&0	&1	&1\\
    y	&0	&0	&1	&0	&0	&2\\
    x	&1	&1	&1	&0	&2	&\Zero\\
    y^2	&0	&0	&0	&\Zero	&\Zero	&\Zero\\
    x\,y&1	&0	&2	&\Zero	&\Zero	&\Zero\\
    x^2	&1	&2	&\Zero	&\Zero	&\Zero	&\Zero\\
    y^3	&0	&\Zero	&\Zero	&\Zero	&\Zero	&\Zero\\
    x\,y^2&0	&\Zero	&\Zero	&\Zero	&\Zero	&\Zero\\
    x^2\,y&2	&\Zero	&\Zero	&\Zero	&\Zero	&\Zero
  }\,
  \begin{pmatrix}
    1\\0\\-2\\0\\0\\1
  \end{pmatrix}=
  \begin{pmatrix}
    0\\0\\-1\\0\\-3\\1\\0\\0\\2
  \end{pmatrix}.
  \]
  From the polynomial point of view, $F_{x^2}=P_{\cT[x^2\,y]}\,\,(x^2-2\,x+1)\bmod
  (x^3,y^4)=-x\,y^3-3\,x\,y^2+y^3+2\,y^2$.
  
  The \BMS algorithm will return the three relations $C_{y^2}=y^2$,
  $C_{x\,y}=x\,y-y-1$ and $C_{x^2}=x^2-2\,x+1$.
\end{example}
\begin{remark}
  As for the \BM algorithm, the \BMS algorithm will always return a
  relation $C_g$ with $\LM(C_g)=g$ a pure power in each
  variable. Therefore, it can return $C_g$ with $g\succ a$,
  corresponding to a vector in the kernel of the empty
  matrix $H_{\emptyset,\cT[g]}$.
\end{remark}

\subsubsection{The \sFGLM algorithm}
\label{sss:sfglm}
The \sFGLM algorithm aims to compute vectors in the
kernel of a more general multi-Hankel matrix $H_{U,T}$, with $T$ and $U$ two
ordered sets of monomials. The kernel vectors that we wish to compute are those
corresponding to relations $[u C_g]=0$ with
$\LM(C_g)=g$ and $u\in U$, such that for all $t\in\cT$, if $t\,g\in T$, then
$[u\,t\,C_g]=0$. In other words, if the vector corresponding to $C_g$ is
in the kernel, then for all $t$ with $t\,g\in T$, so is the vector
corresponding to $t\,C_g$.

Let us assume that both sets of terms $T$ and
$U$ satisfy $T=\cT[a]$ and $U=\cT[b]$. This allows us to encompass
both the \BMS algorithm and the \sFGLM algorithm.

The goal is to extend the multi-Hankel matrix-vector product
\begin{equation}
\kbordermatrix{
	&1	&\cdots	&a^-		&a\\
1	&[1]	&\cdots	&[a^-]		&[a]\\
\vdots	&\vdots	&	&\vdots		&\vdots\\
b^-	&[b^-]&\cdots	&[a^-\,b^-]	&[a\,b^-]\\
b	&[b]	&\cdots	&[a^-\,b]	&[a\,b]
}\,
\begin{pmatrix}
  \gamma_1\\\vdots\\\gamma_{g^-}\\1\\0\\\vdots\\0
\end{pmatrix}=
\begin{pmatrix}
0\\\vdots\\0\\0
\end{pmatrix}\label{eq:multiHankel}
\end{equation}
in a similar fashion as in equations~\eqref{eq:extendedHankel}
and~\eqref{eq:extendedMultiHankel} with as many rows as possible
and setting any table term 
$[t\,u]$ to zero whenever $t\,u\succ a\,b$.

For $C_g$ a relation, we want to build a multi-Hankel matrix $H_{U,T}$
whose kernel contains the vector 
corresponding to $C_g$ if, and only if, the vectors
corresponding to $t\,C_g$ are in the kernel of
$H_{\cT[b],\cT[a]}$. This is achieved by choosing $T=\cT[g]=\{1,\ldots,g\}$
and picking, for each $t$, the rows labeled with
$\{t,\ldots,b\,t\}=\cT[b]+\{t\}$. Thus, the set
of all rows is $U=\cT[b]+\cT[s]$ where
$s$ is the largest monomial such that $s\,g\preceq a$.

Now, we can expand this matrix by adding rows up to $a\,b$ and
setting table terms $[t\,u]$ to $0$ whenever $t\,u\succ a\,b$. Yet, since
the first rows are in $\cT[b]+\cT[s]$ and the set of rows should be
stable by division, we remove from $\cT[a]+\cT[b]$ any multiple of a
monomial in $\cT[b\,s]\setminus(\cT[b]+\cT[s])$.

It remains to make the link between this matrix-vector product and the product of
the two polynomials $P_{\cT[a]+\cT[b]}$ and $C_g$ modulo
$B=(x_1^{D_1+1},\ldots,x_n^{D_n+1})$.
Note that since $\cT[b]+\cT[s]\subseteq\cT[b\,s]$ but may not
be equal to $\cT[b\,s]$, the leading monomial of
$F_g=P_{\cT[a]+\cT[b]}\,C_g\bmod B$ is $\frac{M}{u}$ with
$u$ a multiple of a monomial that may not be in $\cT[b]+\cT[s]$. We let
$\tilde{F}_g$ be the
same polynomial as $F_g$ where we set to zero any monomial
$\frac{M}{u}$ in $F_g$ with
$u$ a multiple of a monomial in $\cT[b\,s]\setminus(\cT[b]+\cT[s])$. Since the
monomials $\frac{M}{u}$ with $u\in\cT[b]+\cT[s]$ are now the largest
possible monomials in $\tilde{F}_g$, $C_g$ is a valid
relation if, and only if, $\LM(\tilde{F}_g)\prec\frac{M}{b\,s}$.

\begin{proposition}
  \label{prop:lt}
  Let $T=\cT[a]$ and $U=\cT[b]$ be finite sets of monomials in $\K[\x]$, let
  $M=\lcm(T+U)=x_1^{D_1}\,\cdots\,x_n^{D_n}$ and $B=(x_1^{D_1+1},\ldots,x_n^{D_n+1})$.

  Let $C_g$ be a polynomial with support in $T$ and with leading
  monomial $g$ and let $s$ be the greatest monomial such that
  $s\,g\preceq a$.

  Let
  $\cG_s$ be a minimal set of monomials generating the sets of monomials
  less than $b\,s$ but not in $\cT[b]+\cT[s]$, \ie $\cG_s$ is a reduced \gb of
  the monomial ideal generated by $\cT[b\,s]\setminus(\cT[b]+\cT[s])$.

  Let
  $\tilde{F}_g$ be the polynomial obtained by setting to
  zero all the coefficients of monomials
  $\frac{M}{u}$ of
  $F_g=P_{T+U}\,C_g\bmod B
  =\sum_{\tau\in(T+U)}f_{g,\tau}\,\frac{M}{\tau}$,
  with $u\in\sca{\cG_s}$. That is,
  $\tilde{F}_g=\sum_{\tau\in(T+U)}\tilde{f}_{g,\tau}\,\frac{M}{\tau}$, where
  $\NormalForm\pare{\sum_{\tau\in(T+U)}f_{g,\tau}\,\tau,\cG_s}
  =\sum_{\tau\in(T+U)}\tilde{f}_{g,\tau}\,\tau$.

  Then, $[u\,t\,C_g]=0$ for all $u\in U$ and all
  $t\in\cT[s]$ if, and only if, $\LM(\tilde{F}_g)\prec\frac{M}{b\,s}$.
\end{proposition}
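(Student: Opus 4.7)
The plan is to translate $\LM(\tilde{F}_g)\prec M/(b\,s)$ into a vanishing condition on the coefficients of $F_g$, and then to match those coefficients to the sequence evaluations $[u\,t\,C_g]$. First I would expand $F_g=P_{T+U}\,C_g\bmod B$ coefficient by coefficient: writing $C_g=\sum_\sigma c_\sigma\,\sigma$, the summand $c_\sigma\,\sigma\cdot[\tau]\,M/\tau$ survives modulo $B$ exactly when $\sigma\mid\tau$, in which case it becomes $c_\sigma\,[\tau]\,M/(\tau/\sigma)$. Reindexing by $v=\tau/\sigma$ yields
\[f_{g,v}=\sum_{\sigma\in\supp(C_g),\ v\,\sigma\in T+U}c_\sigma\,[v\,\sigma].\]
For $v=u\,t$ with $u\in U$ and $t\in\cT[s]$, the defining property of $s$ gives $t\,\sigma\preceq s\,g\preceq a$ for every $\sigma\in\supp(C_g)$, so $t\,\sigma\in T$ and $v\,\sigma\in U+T$; hence the truncation in the sum is vacuous and $f_{g,u\,t}=[u\,t\,C_g]$.

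Next I would record two structural facts. Both $U=\cT[b]$ and $\cT[s]$ are closed under divisors, hence so is their Minkowski sum $U+\cT[s]$. This yields $\sca{\cG_s}\cap(U+\cT[s])=\emptyset$, and conversely any $\tau\preceq b\,s$ not lying in $\sca{\cG_s}$ must belong to $U+\cT[s]$. Independently, for any two monomials $\tau,\tau'\mid M$, one has $\tau\prec\tau'\Leftrightarrow M/\tau\succ M/\tau'$: writing $\tau=\gamma\,\mu$, $\tau'=\gamma\,\nu$ with $\gcd(\mu,\nu)=1$ and $k=M/\lcm(\tau,\tau')$, a short computation gives $M/\tau=\nu\,k$ and $M/\tau'=\mu\,k$, so compatibility of $\prec$ with multiplication flips the order.

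These pieces assemble cleanly. The support of $\tilde{F}_g$ consists of monomials $M/\tau$ with $\tau\in(T+U)\setminus\sca{\cG_s}$ and $f_{g,\tau}\neq 0$. The inequality $\LM(\tilde{F}_g)\prec M/(b\,s)$ is therefore equivalent to $f_{g,\tau}=0$ for every such $\tau$ with $M/\tau\succeq M/(b\,s)$; by the reversal this condition becomes $\tau\preceq b\,s$; by the structural fact such $\tau$ are exactly the elements of $U+\cT[s]$; and by the first step the corresponding coefficients equal $[u\,t\,C_g]$. The main obstacle I anticipate is the bookkeeping in the first step, namely verifying that $v\,\sigma$ really stays inside $T+U$ for every $\sigma\in\supp(C_g)$ whenever $v\in U+\cT[s]$, since only then does $f_{g,v}$ coincide with the genuine sequence evaluation $[v\,C_g]$; this is precisely where the hypothesis that $s$ is the largest monomial with $s\,g\preceq a$ is essential.
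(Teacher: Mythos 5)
Your proof is correct, and it faithfully formalizes the argument that the paper only sketches informally in the discussion immediately preceding the proposition (the paper states no explicit proof for Proposition~\ref{prop:lt}). You make the three ingredients precise that the paper takes for granted: the coefficient identity $f_{g,u\,t}=[u\,t\,C_g]$ for $u\,t\in\cT[b]+\cT[s]$ (which hinges, as you note, on the defining property of $s$ to keep every $\sigma\in\supp(C_g)$ inside $T$ after multiplication by $t$); the order reversal $\tau\preceq\tau'\Leftrightarrow M/\tau\succeq M/\tau'$ for divisors of $M$; and the fact that $\cT[b]+\cT[s]$ is exactly the complement of $\sca{\cG_s}$ inside $\cT[b\,s]$. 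The one place you state without justification is that the Minkowski sum of two divisor-closed monomial sets is divisor-closed; that is easy (decompose any $\mu\mid u\,t$ coordinatewise into $u'\mid u$ and $t'\mid t$) but is genuinely used to get $\sca{\cG_s}\cap(\cT[b]+\cT[s])=\emptyset$, so it deserves a sentence. With that filled in, your argument is complete and matches the paper's intended reasoning.
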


\begin{example}\label{ex:sFGLM_bin}
  We still consider the binomial table $\bin$ with
  $\DRL(y\prec x)$. We let $a=b=x\,y^2$ so that $T=U=\cT[x\,y^2]$
  \[
  H_{U,T}=
  \kbordermatrix{
    	&1	&y	&x	&y^2	&x\,y	&x^2
        &y^3	&x\,y^2\\
    1	&1	&0	&1	&0	&1	&1
    	&0	&0\\
    y	&0	&0	&1	&0	&0	&2
    	&0	&0\\
    x	&1	&1	&1	&0	&2	&1
    	&0	&1\\
    y^2	&0	&0	&0	&0	&0	&1
	&0	&0\\
    x\,y&1	&0	&2	&0	&1	&3
	&0	&0\\
    x^2	&1	&2	&1	&1	&3	&1
	&0	&3\\
    y^3	&0	&0	&0	&0	&0	&0
    	&0	&0\\
    x\,y^2&0	&0	&1	&0	&0	&3
	&0	&0\\
  }.\]

  The computation of the kernel of this matrix yields the vectors $\pare{
    \begin{smallmatrix}
      -1\\-1\\0\\0\\1\\0\\0\\0\\
    \end{smallmatrix}},\pare{
    \begin{smallmatrix}
      0\\-1\\0\\-1\\0\\0\\0\\1\\
    \end{smallmatrix}}$ corresponding to
  $[u\,(x\,y-y-1)]=[u\,y\,(x\,y-y-1)]=0$ for all
  $u\in U$ and the vector $\pare{
    \begin{smallmatrix}
      0\\0\\0\\0\\0\\0\\1\\0
    \end{smallmatrix}}$ corresponding to $[u\,y^3]=0$ for all $u\in U$.
  
  For $g=x\,y$, since $g\,y=x\,y^2=a$, we have $s=y$. Then,
  $\cT[b]+\cT[s]=\cT[x\,y^2]+\cT[y]=\cT[x\,y^3]\setminus\{x^3\}$.
  Thus,
  we do not add any row with label multiple of $x^3$ in
  the extended matrix-vector product. That is, the set of rows of the
  extended matrix is
  $(\cT[x\,y^2]+\cT[x\,y^2])\setminus\{x^3,x^3\,y,x^4,x^3\,y^2\}$:
  \[
  \kbordermatrix{
    		&1	&y	&x	&y^2	&x\,y\\
    1		&1	&0	&1	&0	&1\\
    y		&0	&0	&1	&0	&0\\
    x		&1	&1	&1	&0	&2\\
    y^2		&0	&0	&0	&0	&0\\
    x\,y	&1	&0	&2	&0	&1\\
    x^2		&1	&2	&1	&1	&3\\
    y^3		&0	&0	&0	&0	&0\\
    x\,y^2	&0	&0	&1	&0	&0\\
    x^2\,y	&2	&1	&3	&0	&3\\
    x^3    	&1	&3	&1	&3	&\Zero\\
    \overtabline
    y^4		&0	&0	&0	&0	&0\\
    x\,y^3	&0	&0	&0	&0	&0\\
    x^2\,y^2	&1	&0	&3	&0	&\Zero\\
    x^3\,y    	&3	&3	&\Zero	&\Zero	&\Zero\\
    \overtabline
    x^4    	&1	&\Zero	&\Zero	&\Zero	&\Zero\\
    \overtabline
    y^5		&0	&0	&0	&\Zero	&\Zero\\
    x\,y^4	&0	&0	&0	&\Zero	&\Zero\\
    x^2\,y^3	&0	&0	&\Zero	&\Zero	&\Zero\\
    x^3\,y^2    &3	&\Zero	&\Zero	&\Zero  &\Zero\\
    \overtabline
    y^6		&0	&\Zero	&\Zero	&\Zero	&\Zero\\
    x\,y^5	&0	&\Zero	&\Zero	&\Zero	&\Zero\\
    x^2\,y^4	&0	&\Zero	&\Zero	&\Zero	&\Zero\\
  }\,
  \begin{pmatrix}
    -1\\-1\\0\\0\\1
  \end{pmatrix}=
  \begin{pmatrix}
    0\\0\\0\\0\\0\\0\\0\\0\\0\\-4\\\overtabline
    0\\0\\-1\\-6\\\overtabline
    -1\\\overtabline
    0\\0\\0\\-3\\\overtabline
    0\\0\\0\\
  \end{pmatrix}.
  \]
  From the polynomial viewpoint, $F_{x\,y}=P_{T+U}\,(x\,y-y-1)\bmod
  (x^5,y^7)=-4\,x\,y^6-x^2\,y^4-6\,x\,y^5-y^6-3\,x\,y^4$ so that
  $\tilde{F}_{x\,y}=-x^2\,y^4=\sum_{\tau\in(T+U)}\tilde{f}_{g,\tau}\,\frac{M}{\tau}$.
  The vector $\pare{
    \begin{smallmatrix}
      -1\\-1\\0\\0\\1
    \end{smallmatrix}}$ being in the kernel of
  $H_{\cT[x\,y^2]+\cT[y],\cT[x\,y]}$ is then equivalent to asking that
  $\LM(\tilde{F}_{x\,y})\prec\frac{M}{b\,s}=\frac{x^4\,y^6}{x\,y^3}=x^3\,y^3$. Notice
  that this condition is not satisfied by $F_{x\,y}$ since $\LM(F_{x\,y})=x\,y^6$.

  For $g=y^3$, since $a=x\,y^2$, we have $s=1$. Then,
  $\cT[g]+\cT[s]=\cT[y^3]+\cT[1]=\cT[y^3]$, thus we add all
  the rows from $\cT[x\,y^2]+\cT[x\,y^2]$.
  \[
  \kbordermatrix{
    		&1	&y	&x	&y^2	&x\,y	&x^2	&y^3\\
    1		&1	&0	&1	&0	&1	&1	&0\\
    y		&0	&0	&1	&0	&0	&2	&0\\
    x		&1	&1	&1	&0	&2	&1	&0\\
    y^2		&0	&0	&0	&0	&0	&1	&0\\
    x\,y	&1	&0	&2	&0	&1	&3	&0\\
    x^2		&1	&2	&1	&1	&3	&1	&0\\
    y^3		&0	&0	&0	&0	&0	&0	&0\\
    x\,y^2	&0	&0	&1	&0	&0	&3	&0\\
    x^2\,y	&2	&1	&3	&0	&3	&\Zero	&0\\
    x^3		&1	&3	&1	&3	&\Zero	&\Zero	&\Zero\\
    y^4		&0	&0	&0	&0	&0	&0	&\Zero\\
    x\,y^3	&0	&0	&0	&0	&0	&\Zero	&\Zero\\
    x^2\,y^2	&1	&0	&3	&0	&\Zero	&\Zero	&\Zero\\
    x^3\,y	&3	&3	&\Zero	&\Zero	&\Zero	&\Zero	&\Zero\\
    x^4		&1	&\Zero	&\Zero	&\Zero	&\Zero	&\Zero	&\Zero\\
    y^5		&0	&0	&0	&\Zero	&\Zero	&\Zero	&\Zero\\
    x\,y^4	&0	&0	&0	&\Zero	&\Zero	&\Zero	&\Zero\\
    x^2\,y^3	&0	&0	&\Zero	&\Zero	&\Zero	&\Zero	&\Zero\\
    x^3\,y^2	&3	&\Zero	&\Zero	&\Zero	&\Zero	&\Zero	&\Zero\\
    y^6		&0	&\Zero	&\Zero	&\Zero	&\Zero	&\Zero	&\Zero\\
    x\,y^5	&0	&\Zero	&\Zero	&\Zero	&\Zero	&\Zero	&\Zero\\
    x^2\,y^4	&0	&\Zero	&\Zero	&\Zero	&\Zero	&\Zero	&\Zero\\
  }\,
  \begin{pmatrix}
    0\\0\\0\\0\\0\\0\\1
  \end{pmatrix}=
  \begin{pmatrix}
    0\\0\\0\\0\\0\\0\\0\\0\\0\\0\\0\\0\\0\\0\\0\\0\\0\\0\\0\\0\\0\\0
  \end{pmatrix}.
  \]
  The polynomial $F_{y^3}=P_{T+U}\,y^3\bmod
  (x^5,y^7)=0$, so that
  $\tilde{F}_{y^3}=0$. Trivially, $\LM(\tilde{F}_{y^3})$ satisfies any
  constraint on its leading monomial, in particular
  $\LM(\tilde{F}_{y^3})\prec\frac{M}{b\,s}=\frac{x^4\,y^6}{x\,y^2}=x^2\,y^4$.
\end{example}


\section{A division-based algorithm}
\label{s:division}
The goal is now to design an algorithm based on polynomial division to
determine all the $C_g$ for $|$-minimal $g$ such that
$\LM(\tilde{F}_g)$ is small enough, where $F_g=P_{T+U}\,C_g\bmod B$
and $\tilde{F}_g$ is obtained from $F_g$ as in
Proposition~\ref{prop:lt}.

We start with two sets of terms $T=\cT[a]$ and $U=\cT[b]$ so that
$M=x_1^{D_1}\,\cdots\,x_n^{D_n}=\lcm(T+U)$.
We initialize
$B=(B_1,\ldots,B_n)=\big(x_1^{D_1+1},\ldots,x_n^{D_n+1}\big)$,
$R_{B_1}=[B_1,0],\ldots,R_{B_n}=[B_n,0]$ and
$R_1=[P_{T+U},1]$.

For any monomial $g$ not in the ideal spanned by $B$ and
$R_g=[F_g,C_g]=[P_{T+U}\,C_g\bmod B,C_g]$, by Proposition~\ref{prop:lt}, $C_g$ is a
valid relation if $g\in T$ and $\LM(\tilde{F}_g)\prec\frac{M}{b\,s}$ with
$s=\max\{\sigma,\ \sigma\,g\preceq a\}$.
To go along with the fact that the
\BMS algorithm always returns a relation $C_g$ with $g=\LM(C_g)$ a
pure power of a variable, if $g\not\in T$, then $C_g$ will
automatically be
considered a valid relation as well.

From a failing relation $C_m$, we get that $m$
is in the staircase of the \gb of relations. Thus, each time a built
relation is not valid, we update the staircase of the ideal of
relations. At each step, we
know a staircase $S$ which is a subset of the output and target
staircase. Equivalently, we know the set
$\cH=\min_|\{h\in\cT\setminus S\}$ which are the leading terms of the
candidate relations.

The algorithm  uses the following subroutines
\begin{description}
\item[$\NormalForm(R_{m'},{[R_m,R_{B_1},\ldots,R_{B_n},R_{t_1},\ldots,R_{t_r}]})$,]
  for computing the \emph{normal form} of
  $[F_{m'},C_{m'}]$ \wrt the list
  $R_m,R_{B_1},\ldots,R_{B_n},\allowbreak R_{t_1},\ldots,R_{t_r}$
  with $\LM(F_{t_1})\succ\cdots\succ\LM(F_{t_r})$. To do so, it computes first
  $Q_m,Q_{B_1},\ldots,Q_{B_n},\allowbreak Q_{t_1},\ldots,Q_{t_r}$ the quotients of
  the division of $F_{m'}$ by the list of polynomials
  $[F_m,B_1,\ldots,B_n,F_{t_1},\ldots,F_{t_r}]$ and then return
  $R_h=R_{m'}-Q_m\,R_m-Q_{B_1}\,R_{B_1}-\cdots-Q_{B_n}\,R_{B_n}
  -Q_{t_1}\,R_{t_1}-\cdots-Q_{t_r}\,R_{t_r}$.
\item[$\Stabilize(S)$,] for computing the true staircase containing
  $S$, \ie all the divisors of terms in $S$.
\item[$\Border(S)$,] for computing the least terms for $|$ outside of $S$.
\end{description}

For $h\in\cH$, we now want to build $R_h$ with the least $\LM(\tilde{F}_h)$.
\begin{instruction}\label{ins:newpair}Pick a failing
  pair $R_m=[F_m,C_m]$ with $h=q\,m$ and $m$ the largest for $\prec$,
  \begin{enumerate}
  \item if there exists another failing pair $R_{m'}=[F_{m'},C_{m'}]$
    such that $\LM(F_{m'})=q\,\LM(F_m)$, then compute $R_h$ as the
    $\NormalForm (R_{m'},[R_m,R_{B_1},\ldots,R_{B_n},R_{t_1},\ldots,R_{t_n}])$ where
    $C_{t_1},\ldots,C_{t_r}$ are failing relations and
    $\LM(F_{t_1})\succ\cdots\succ\LM(F_{t_r})$. 
  \item\label{enum:oob} otherwise, compute $R_h$ as
    $\NormalForm (q\,R_m,[R_{B_1},\ldots,R_{B_n},R_{t_1},\ldots,R_{t_n}])$.
  \end{enumerate}
\end{instruction}

\begin{remark}
  If $q\,\LM(F_m)$ is in the ideal spanned by $B$,
  then case~\ref{enum:oob} of Instruction~\ref{ins:newpair} is
  equivalent to computing the normal form of 
  $[q\,\LM(F_m),0]$ \wrt $[R_m,R_{B_1},\ldots,R_{B_n},\allowbreak
  R_{t_1},\ldots,\allowbreak R_{t_r}]$. In fact, unless the table is
  $0$, at
  the start, $R_1=[P_{T+U},1]$ must fail when shifted by a monomial
  $s=x_1^{i_1}\,\cdots\,x_n^{i_n}$ and we have to make new pairs
  $R_{x_1^{i_1+1}},\ldots,R_{x_n^{i_n+1}}$. Since
  $\LM(P_{T+U})=\frac{M}{s}$, then these pairs can be computed as
  the normal forms of
  $[x_k^{i_k+1}\,\frac{M}{s},0]=[B_k,0]\,M'$, with $M'\in\cT$,
  \wrt the ordered list
  $[R_1,R_{B_1},\ldots,R_{B_n}]$. In dimension $1$, this comes down to
  reducing $[x_1^{D_1+1},0]=[B_1,0]=R_{B_1}$ \wrt
  $[R_1,R_{B_1}]$, and thus $R_1$ only. This is indeed the first step of
  the extended Euclidean
  algorithm called on $B_1$ and $F_1$ as described in Section~\ref{ss:bm}.
\end{remark}

\begin{algorithm2e}[htbp!]
  \small
  \DontPrintSemicolon
  \TitleOfAlgo{\divalgo\label{algo:divalgo}}
  \KwIn{A table $\bseq=(\seq_{\bi})_{\bi\in\N^n}$ with coefficients in
    $\K$, a monomial ordering $\prec$ and two monomials $a$ and $b$.}
  \KwOut{A \gb $G$ of the ideal of relations of $\bseq$ for $\prec$.}
  $T \coloneqq \{t\in\cT,\ t\preceq a\}$,
  $U \coloneqq \{u\in\cT,\ u\preceq b\}$.\;
  $M \coloneqq \lcm(T)\,\lcm(U)$.\;
  \lFor(\tcp*[f]{pairs on the edge}){$i$ \KwFrom $1$ \KwTo
    $n$}{$R_{B_i}\coloneqq\Big[x_i^{1+\deg_{x_i} M},0\Big]$.}
  $P \coloneqq \sum\limits_{\tau\in(T+U)}[\tau]\,\frac{M}{\tau}$.\tcp*{the
    mirror of the truncated generating series}
  $R \coloneqq \{[P,1]\}$.\tcp*{set of pairs $[F_m,C_m]
    =[P\cdot C_m\bmod B,C_m]$ to be tested}
  $R'\coloneqq \emptyset$.\tcp*{set of failing pairs}
  $G \coloneqq \emptyset$, $S \coloneqq \emptyset$.
  \tcp*{the future \gb and staircase}
  \While{$R\neq\emptyset$}{
    $R_m=[F_m,C_m]\coloneqq$ first element of $R$ and remove it from $R$.\;
    \uIf(\tcp*[f]{good relation, see
      Proposition~\ref{prop:lt}}){$m\notin T$ or $\LM(\tilde{F}_m)\prec\frac{M}{b\,s}$}{
      $G\coloneqq G\cup\{C_m\}$.\;
    }
    \Else(\tcp*[f]{bad relation}){
      $R'\coloneqq R'\cup\{R_m\}$.\;
      \Forall(\tcp*[f]{reduce next pairs with it}){$r\in R$}{
        $r\coloneqq\NormalForm (r,[R_{B_1},\ldots,R_{B_r},R_m])$.\;
      }
      $S\coloneqq\Stabilize\left(S\cup\left\{m\right\}\right)$.\;
      $H\coloneqq\Border(S)$.\;
      \Forall(\tcp*[f]{compute new pairs}){$h\in H$}{
        \If{there is no relation $C_h\in G$ or no pair $R_h\in R$}{
          Make a new pair $R_h=[F_h,C_h]$ following
          Instruction~\ref{ins:newpair} and add it to $R$.
        }
      }
    }
  }
  \KwRet $G$.
\end{algorithm2e}
\begin{example}[See~\cite{web}]
  Let $\bin=\left(\binom{i}{j}\right)_{(i,j)\in\N^2}$ be the binomial
  table, $\prec$ be the
  $\DRL(y\prec x)$ monomial ordering and $T=\cT[x^3]$ and $U=\{1\}$ be
  sets of terms. We have $a=x^3$, $b=1$,
  $T=T+U$ and $M=\lcm(T)=x^3\,y^3$ so that
  $P_T=x^3\,y^3+x^2\,y^3+x^2\,y^2+x\,y^3+2\,x\,y^2+y^3$, $R_1=[P_T,1]=[F_1,C_1]$
  and $R_{B_1}=[x^4,0]$, $R_{B_2}=[y^4,0]$.

  \begin{itemize}
  \item $m=1$, thus $s=x^3$ and as $\LM(F_1)=\LM(P_T)=x^3\,y^3=\frac{M}{1}$, then the
    relation $C_1$ fails when shifting by $1$ so that $1$ is in the
    staircase. Thus $\cH=\{y,x\}$. We create $R_y$ by computing the normal form of
    $[y\,\LM(F_1),0]=[x^3\,y^4,0]$ \wrt $[R_1,R_{B_1},R_{B_2}]$ and get
    $R_y=[F_y,C_y]=[x^2\,y^3+2\,x\,y^3,y]$. Likewise
    $R_x=[F_x,C_x]=[x^3\,y^2+x^2\,y^2-2\,x\,y^2-y^3,x-1]$.
  \item $m=y$, thus $s=x^2$ and as $\LM(F_y)=x^2\,y^3=\frac{M}{x}$,
    then the relation $C_y$ fails when shifting by $x$ so that $y$ is
    in the staircase. Thus
    $\cH=\{x,y^2\}$. We create
    \begin{itemize}
    \item $R_{y^2}=[0,y^2]$ by computing the normal form of
      $[y\,\LM(F_y),\allowbreak0]=[x^2\,y^4,0]$ \wrt
      $[R_y,R_{B_1},\allowbreak R_{B_2},R_1,R_x]$.
    \end{itemize}
  \item $m=x$, thus $s=x^2$ and as $\LM(F_x)=x^3\,y^2=\frac{M}{y}$,
    then the relation $C_x$ fails when shifting by $y$ so that $x$ is
    in the staircase. Thus
    $\cH=\{y^2,x\,y,x^2\}$. We create
    \begin{itemize}
    \item $R_{x\,y}=[-x^2\,y^2-3\,x\,y^3-2\,x\,y^2-y^3,x\,y-y-1]$ 
      by computing the normal form of
      $R_1$ \wrt $[R_y,R_{B_1},\allowbreak R_{B_2},R_x]$;
    \item $R_{x^2}=[-3\,x^2\,y^2\allowbreak
      -x\,y^3+2\,x\,y^2+y^3,x^2-2\,x+1]$
      by computing the normal form of
      $[x\,\LM(F_x),0]=[x^4\,y^2,0]$ \wrt $[R_x,R_{B_1},R_{B_2},R_1,R_y]$.
    \end{itemize}
  \item $m=y^2$, thus $s=x$ and as $F_{y^2}=0$, then the relation is
    necessarily valid.
  \item $m=x\,y$, thus $s=x$ and as
    $\LM(F_{x\,y})=x^2\,y^2=\frac{M}{x\,y}$, then the relation
    $C_{x\,y}$ is valid.
  \item $m=x^2$, thus $s=x$ and, likewise, as
    $\LM(F_{x^2})=x^2\,y^2=\frac{M}{x\,y}$, then the 
    relation $C_{x^2}$ is valid.
  \end{itemize}
  We return $C_{y^2}=y^2$, $C_{x\,y}=x\,y-y-1$ and $C_{x^2}=x^2-2\,x+1$.
\end{example}
\begin{example}[See~\cite{web} and Example~\ref{ex:sFGLM_bin}]
  We keep $\bin=\left(\binom{i}{j}\right)_{(i,j)\in\N^2}$, the binomial
  table, and $\prec$ set as $\DRL(y\prec x)$. We let however $T=U=\cT[x\,y^2]$
  so that $a=b=x\,y^2$, $2\,T=T+T$ and $M=\lcm(2\,T)=x^4\,y^6$. Thus,
  $P_{2\,T}=x^4\,y^6+x^3\,y^6+x^3\,y^5+x^2\,y^6+2\,x^2\,y^5+x\,y^6+x^2\,y^4
  +3\,x\,y^5+y^6+3\,x\,y^4$, $R_1=[P_{2\,T},1]=[F_1,C_1]$
  and $R_{B_1}=[x^5,0]$, $R_{B_2}=[y^7,0]$.

  \begin{itemize}
  \item $m=1$, thus $s=x\,y^2$ and
    $\cT[b]+\cT[s]=\cT[x^2\,y^4]\setminus\{x^4\,y,x^5\}$, with $x^5$
    not dividing $M$. Therefore,
    $\tilde{F}_1$ is obtained from $F_1$ by removing monomial
    $\frac{M}{x^4\,y}=y^5$. As
    $\LM(F_1)=\LM(P_{2\,T})=x^4\,y^6\succeq\frac{M}{b\,s}=x^2\,y^2$, then the 
    relation $C_1$ fails and $1$ is in the
    staircase. Thus $\cH=\{y,x\}$. We create $R_y$ by computing the normal form of
    $[y\,\LM(F_1),0]=[x^3\,y^4,0]$ \wrt $[R_1,R_{B_1},R_{B_2}]$ and get
    $R_y=[F_y,C_y]=[x^3\,y^6+2\,x^2\,y^6+x^2\,y^5+3\,x\,y^6+3\,x\,y^5,y]$. Likewise
    $R_x=[F_x,C_x]=[x^4\,y^5+x^3\,y^5+x^3\,y^4+x^2\,y^5+2\,x^2\,y^4+3\,x\,y^5
    +y^6+3\,x\,y^4,x-1]$.
  \item $m=y$, thus $s=x\,y$ and
    $\cT[b]+\cT[s]=\cT[x\,y^4]\setminus\{x^4\}$. Therefore,
    $\tilde{F}_y$ is obtained from $F_y$ by removing monomial
    $\frac{M}{x^4}=y^6$. As $\LM(F_y)=x^3\,y^6\succeq\frac{M}{b\,s}=x^2\,y^3$,
    then the relation $C_y$ fails and $y$ is
    in the staircase. Thus
    $\cH=\{x,y^2\}$. We create
    \begin{itemize}
    \item $R_{y^2}=[x^2\,y^6+3\,x\,y^6,y^2]$ by computing the normal form of
      $[y\,\LM(F_y),\allowbreak0]=[x^3\,y^7,0]$ \wrt
      $[R_y,R_{B_1},R_{B_2},R_1,R_x]$.
    \end{itemize}
  \item $m=x$, thus $s=y^2$ and
    $\cT[b]+\cT[s]=\cT[x\,y^4]\setminus\{x^3\,y,x^4\}$. Therefore,
    $\tilde{F}_x$ is obtained from $F_x$ by removing monomials
    $\frac{M}{x^3\,y}=x\,y^5$ and $\frac{M}{x^4}=y^6$.
    As $\LM(F_x)=x^4\,y^5\succeq\frac{M}{b\,s}=x^3\,y^2$,
    then the relation $C_x$ fails and $x$ is
    in the staircase. Thus
    $\cH=\{y^2,x\,y,x^2\}$. We create
    \begin{itemize}
    \item $R_{x\,y}=[-4\,x\,y^6-x^2\,y^4-6\,x\,y^5-y^6-3\,x\,y^4,x\,y-y-1]$ 
      by computing the normal form of
      $R_1$ \wrt $[R_y,R_{B_1},\allowbreak R_{B_2},R_x]$;
    \item $R_{x^2}=[x^4\,y^4\allowbreak
      +x^3\,y^4-4\,x^2\,y^5-x\,y^6-5\,x^2\,y^4+3\,x\,y^5+y^6+3\,x\,y^4,x^2-2\,x+1]$
      by computing the normal form of
      $[x\,\LM(F_x),0]=[x^5\,y^5,0]$ \wrt $[R_x,R_{B_1},R_{B_2},R_1,R_y]$.
    \end{itemize}
  \item $m=y^2$, thus $s=x$ and
    $\cT[b]+\cT[s]=\cT[x^2\,y^2]$.
    As $\LM(F_{y^2})=x^4\,y^6\succeq\frac{M}{b\,s}=x^2\,y^4$,
    then the relation $C_{y^2}$ fails and $y^2$ is
    in the staircase. Thus
    $\cH=\{x\,y,x^2,y^3\}$. We create
    \begin{itemize}
    \item $R_{y^3}=[0,y^3]$ by computing the normal form of
      $[y\,\LM(F_{y^2}),\allowbreak0]=[x^2\,y^7,0]$ \wrt
      $[R_{y^2},\allowbreak R_{B_1},\allowbreak R_{B_2},R_1,R_x,R_y,R_{x^2}]$.
    \end{itemize}
  \item $m=x\,y$, thus $s=y$ and 
    $\cT[b]+\cT[s]=\cT[x\,y^3]\setminus\{x^3\}$. Therefore,
    $\tilde{F}_{x\,y}$ is obtained from $F_{x\,y}$ by removing monomial
    $\frac{M}{x^3}=x\,y^6$.
    As $\LM(\tilde{F}_{x\,y})=x^2\,y^4\prec\frac{M}{b\,s}=x^3\,y^3$,
    $C_{x\,y}$ is valid.
  \item $m=x^2$, thus $s=1$ and $\cT[b]+\cT[s]=\cT[x\,y^2]$.
    As $\LM(F_{x^2})=x^4\,y^4\succeq\frac{M}{b\,s}=x^3\,y^4$,
    then the relation $C_{x^2}$ fails and $x^2$ is
    in the staircase. Thus
    $\cH=\{x\,y,y^3,x^3\}$. We create
    \begin{itemize}
    \item  $R_{x^3}=[-4\,x^3\,y^5-6\,x^3\,y^4+7\,x^2\,y^5+5\,x\,y^6
      +8\,x^2\,y^4-3\,x\,y^5-y^6-3\,x\,y^4,x^3-3\,x^2+y^2+3\,x-1]$ by
      computing the normal form of 
      $[x\,\LM(F_{x^2}),\allowbreak0]=[x^5\,y^4,0]$ \wrt
      $[R_{x^2},\allowbreak R_{B_1},\allowbreak R_{B_2},R_1,R_x,R_y,R_{y^2}]$.
    \end{itemize}
  \item $m=y^3$, thus $s=1$ and 
    $\cT[b]+\cT[s]=\cT[x\,y^2]$.
    As $\LM(\tilde{F}_{y^3})=0\prec\frac{M}{b\,s}=x^3\,y^4$,
    $C_{y^3}$ is valid.
  \item $m=x^3$, thus $s=0$ and 
    $C_{x^3}$ is trivially valid.

    Notice that any relation in $x^3$ would trivially be valid. Though,
    this is the one yielding the smallest leading monomial for
    $F_{x^3}$, \ie $x^3\,y^5$.
  \end{itemize}
  We return $C_{x\,y}=x\,y-y-1$, $C_{y^3}=y^2$ and
  $C_{x^3}=x^3-3\,x^2+y^2+3\,x-1$.
\end{example}

\begin{remark}
  Like the \BMS algorithm, this algorithm creates new potential
  relations by making polynomial combinations of failing relations. As
  a consequence, at each step of the main loop, the potential
  relations, \ie elements of $R$, are not necessarily
  interreduced. Either we can interreduce the final \gb before
  returning it at the last line of the algorithm, or when $C_g$ is
  added to the set $G$ we can update all the current relations by
  removing multiples of $[F_g,C_g]$ and likewise, reduce by $[F_g,C_g]$,
  any subsequent pair $[F_m,C_m]$.
\end{remark}
\begin{example}[See~\cite{web}]
  We give the trace of the \divalgo algorithm with the slight
  modification above called on the table
  $\bseq=((2\,i+1)+(2\,j-1)\,(-1)^{i+j})_{(i,j)\in\N^2}$, the
  stopping monomials $a=y^5$ and $b=1$ and the monomial ordering
  $\DRL(y\prec x)$.

  We set $T\coloneqq\cT[y^5],U\coloneqq\{1\}$, $M\coloneqq x^4\,y^5$ and
  $P=4\,x^3\,y^5+4\,x^4\,y^3+4\,x^3\,y^4+4\,x^2\,y^5-4\,x^4\,y^2+4\,x^2\,y^4
  +8\,x\,y^5+8\,x^4\,y+8\,x^3\,y^2+8\,x^2\,y^3
  +8\,x\,y^4+8\,y^5-8\,x^4$,
  $R_{B_1}\coloneqq[x^5,0],R_{B_2}\coloneqq[y^6,0]$,
  $R=[[P,1]]$.

  \begin{description}
  \item[Pair] $R_1=[F_1,C_1]=[P,1]$, $R\coloneqq\emptyset$ and since $1\in T$ but
    $\LM(F_1)=x^3\,y^5\succeq \frac{M}{s}=x^4$, then
    \begin{itemize}
    \item $R'\coloneqq\{R_1\}$, $S\coloneqq\{1,x\}$ and
      $H\coloneqq\{y,x^2\}$.
    \item We make new pairs added to $R$:
      \begin{itemize}
      \item $R_y=[F_y,C_y]\coloneqq\NormalForm
        ([y\,\LM(F_1),0],[R_1,R_{B_1},\allowbreak R_{B_2}])$
        which can be normalized into
        $R_y,=[4\,x^4\,y^4\allowbreak -\cdots,y-1]$;
      \item $R_{x^2}=[F_{x^2},C_{x^2}]\coloneqq\NormalForm
        ([x^2\,\LM(F_1),0],[R_1,\allowbreak R_{B_1},R_{B_2}])$
        which can be normalized into
        $R_{x^2}=[4\,x^4\,y^3\allowbreak -\cdots,x^2-x-1]$.
      \end{itemize}
    \end{itemize}
  \item[Pair] $R_y=[F_y,C_y]$, $R\coloneqq\{R_{x^2}\}$ and
    since $y\in T$ but
    $\LM(F_y)=x^4\,y^4\succeq\frac{M}{s}=x^4\,y$, then
    \begin{itemize}
    \item $R'\coloneqq\{R_1,R_y\}$, $S\coloneqq\{1,y,x\}$ and
      $H\coloneqq\{y^2,x\,y,x^2\}$.
    \item We make new pairs added to $R$:
      \begin{itemize}
      \item As $y\,\LM(F_y)=x^4\,y^5\not\in\langle x^5,y^6\rangle$ and
        $\LM(F_1)\neq y\,\LM(F_y)$, we can only set 
        $R_{y^2}=[F_{y^2},C_{y^2}]\coloneqq\NormalForm
        (y\,R_y,\allowbreak [R_{B_1},R_{B_2},R_1,R_y])$
        which can be normalized into
        $R_{y^2}=[-4\,x^4\,y^3-\cdots,y^2-x+2\,y-1]$;
      \item $R_{x\,y}=[F_{x\,y},C_{x\,y}]\coloneqq\NormalForm
        ([x\,\LM(F_y),0],[R_y,\allowbreak R_{B_1},R_{B_2},R_1])$
        which can be normalized into
        $R_{x\,y}=[4\,x^4\,y^2-\cdots,x\,y-x+y-1]$.
      \item Nothing is done for $x^2$ since $R_{x^2}$
        already exists.
      \end{itemize}
    \end{itemize}
  \item[Pair] $R_{y^2}=[F_{y^2},C_{y^2}]$,
    $R\coloneqq\{R_{x\,y},R_{x^2}\}$ and since $y^2\in T$ but
    $\LM(F_{y^2})=x^4\,y^3\succeq\frac{M}{s}=x^4\,y^2$, then
    \begin{itemize}
    \item As $\LM(F_{x^2})\succeq\LM(F_{y^2})$, we reduce it and obtain
      $R_{x^2}\coloneqq[-8\,x^2\,y^4\allowbreak -\cdots,x^2+y^2-2\,x+2\,y-2]$.
    \item $R'\coloneqq\{R_1,R_y,R_{y^2}\}$,
      $S\coloneqq\{1,y,x,y^2\}$ and 
      $H\coloneqq\{x\,y,x^2,y^3\}$.
    \item We make new pairs added to $R$:
      \begin{itemize}
      \item $R_{x\,y}$ and $R_{x^2}$ already exist so we do nothing
        for them.
      \item Since $\LM(F_y)=y\,\LM(F_{y^2})$, we can set
        $R_{y^3}=[F_{y^3},C_{y^3}]\allowbreak\coloneqq\NormalForm 
        (R_y,[R_{y^2},R_{B_1},R_{B_2},R_y,R_1])$
        which can be normalized into
        $R_{y^3}=[4\,x^3\,y^4-\cdots,y^3-x\,y+y^2+x-2\,y]$.
      \end{itemize}
    \end{itemize}
  \item[Pair] $R_{x\,y}=[F_{x\,y},C_{x\,y}]$,
    $R\coloneqq\{R_{x^2},R_{y^3}\}$ and since $x\,y\in T$ and
    $\LM(F_{x\,y})=x^4\,y^2\prec\frac{M}{s}=x^2\,y^5$, then
    \begin{itemize}
    \item $G\coloneqq\{x\,y-x+y-1\}$.
    \item As $C_{y^3}=y^3-x\,y+y^2+x-2\,y$ has a term in $x\,y$, we
      update
      $R_{y^3}\coloneqq R_{y^3}+R_{x\,y}=[4\,x^3\,y^4-\cdots,y^3+y^2-y-1]$. 
    \end{itemize}
  \item[Pair] $R_{x^2}=[F_{x^2},C_{x^2}]$,
    $R\coloneqq\{R_{y^3}\}$ and since $x^2\in T$ and
    $\LM(F_{x^2})=x^2\,y^4\prec\frac{M}{s}=x^2\,y^5$, then
    \begin{itemize}
    \item $G\coloneqq\{x\,y-x+y-1,x^2+y^2-2\,x+2\,y-2\}$.
    \end{itemize}
  \item[Pair] $R_{y^3}=[F_{y^3},C_{y^3}]$,
    $R\coloneqq\emptyset$ and since $y^3\in T$ and
    $\LM(F_{y^3})=x^3\,y^4\prec\frac{M}{s}=x^4\,y^3$, then
    \begin{itemize}
    \item $G\coloneqq\{x\,y-x+y-1,x^2+y^2-2\,x+2\,y-2,y^3+y^2-y-1\}$.
    \end{itemize}
  \end{description}
  We return $G$.
\end{example}

\begin{theorem}\label{th:main}
  Let a table $\bseq$, a monomial ordering $\prec$ and two monomials
  $a$ and $b$ be the input of the \divalgo algorithm. Let us
  assume that the reduced \gb $\cG$ of the ideal of relations of $\bseq$ for
  $\prec$ and its
  staircase $S$ satisfy
  $a\succeq\max(S\cup\LM(\cG))$ and for all $g\preceq a$,
  $s=\max\{\sigma\in\cT,\
  \sigma\,g\preceq a\}$, we have
  $\max(S)\preceq s$.
  
  Then, the
  \divalgo algorithm terminates and computes a minimal
  \gb of the ideal of relations of 
  $\bseq$ for $\prec$ in
  $O\big(\Card{S}\,(\Card{S}+\Card{\cG})\,\Card{(\cT[a]+\cT[b])}\big)$
  operations in the base field.
\end{theorem}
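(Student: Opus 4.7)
The plan is to establish termination, correctness, and the complexity bound separately. The two hypotheses play complementary roles: $a \succeq \max(S \cup \LM(\cG))$ places the true staircase and every leading monomial of the reduced \gb inside $T = \cT[a]$, while the condition on $s$ guarantees that for every candidate leading monomial $g$ encountered, the shift region $\cT[s]$ covers the full true staircase, so that the partial validity test of Proposition~\ref{prop:lt} is strong enough to detect genuine relations.

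For termination, I would observe that each iteration of the main loop removes exactly one pair from $R$. A bad iteration enlarges $S$ via $\Stabilize$ and, as will be shown below, keeps $S$ inside the (finite) true staircase; hence at most $\Card{\Staircase(\cG)}$ bad iterations can occur. New pairs are added only for $h \in \Border(S)$, and the explicit guard of the algorithm prevents re-creating an existing pair, so the total number of pairs ever constructed is finite and the loop terminates.

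For correctness, I would maintain the invariant that $S \subseteq \Staircase(\cG)$ throughout, and that every $C_m \in G$ lies in the ideal of relations with $\LM(C_m) = m$. Proposition~\ref{prop:lt} identifies the check ``$m \notin T$ or $\LM(\tilde{F}_m) \prec M/(b\,s)$'' with the vanishing of $[u\,t\,C_m]$ for all $u \in U$ and $t \in \cT[s]$. Combined with $\max(S) \preceq s$, such vanishing propagates to $[\sigma\,C_m]=0$ for every $\sigma \in \cT$ since the staircase generates $\K[\x]/\langle\cG\rangle$, so $C_m$ is a true relation; conversely, a failed test forces $m$ into the true staircase, preserving the invariant after $\Stabilize(S \cup \{m\})$. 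At termination, $\Border(S)$ consists precisely of the minimal elements of $\cT \setminus \Staircase(\cG)$, hence $\LM(G) = \LM(\cG)$ and $G$ is a minimal \gb.

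For the complexity, the bad pairs are in bijection with the monomials added to $S$ and the good pairs have leading monomials inside $\LM(\cG)$, so the total number of pairs ever created is $O(\Card{S} + \Card{\cG})$. Every polynomial handled by $\NormalForm$ has support among the $M/\tau$ for $\tau \in T+U$, bounded by $\Card{\cT[a]+\cT[b]}$, so a single division step costs $O(\Card{\cT[a] + \cT[b]})$. The dominant contribution is the inner loop that reduces every remaining pair in $R$ by each new bad pair: it is triggered at most $\Card{S}$ times, and each pass treats $O(\Card{S} + \Card{\cG})$ pairs at unit cost $O(\Card{\cT[a] + \cT[b]})$, yielding $O(\Card{S}(\Card{S} + \Card{\cG})\Card{\cT[a] + \cT[b]})$. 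The main obstacle I anticipate lies in the correctness step, specifically verifying that after the repeated in-place updates of $R$ and the two-case construction of Instruction~\ref{ins:newpair}, the component $F_m$ of each pair still equals $P \cdot C_m \bmod B$ with $\LM(C_m) = m$, so that Proposition~\ref{prop:lt} can legitimately be invoked when $R_m$ is later tested.
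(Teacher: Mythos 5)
Your argument is direct and self-contained, maintaining the invariant $S\subseteq\Staircase(\cG)$, whereas the paper proceeds by an induction comparing each pair produced by the \divalgo algorithm to the corresponding relation built by the \BMS algorithm; these are genuinely different routes. However, your correctness step has a gap at the implication ``a failed test forces $m$ into the true staircase.'' Proposition~\ref{prop:lt}, the hypothesis $\max(S)\preceq s$, and the non-degeneracy of the bilinear form on $\K[\x]/\langle\cG\rangle$ induced by $\bseq$ do give you the other direction cleanly: if the test passes then $C_m\in\langle\cG\rangle$, hence $m\notin\Staircase(\cG)$, whose contrapositive is that $m\in\Staircase(\cG)$ forces the test to fail. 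But to preserve the invariant after a failure you need the \emph{converse}, and that does not follow from Proposition~\ref{prop:lt} alone: the test failing shows only that this particular $C_m$ is not a valid relation, not that no polynomial with leading monomial $m$ and support in $S\cup\{m\}$ is one. For the converse you must prove that the $C_m$ built by Instruction~\ref{ins:newpair} together with the in-loop $\NormalForm$ updates is optimal, \ie it minimizes $\LM(\tilde{F}_m)$ over all such candidates. This optimality is precisely what the paper establishes inductively by matching the division quotients $q_\mu$ to the coefficients $q_\mu^\star$ of \BMS's update rule, which are known to yield the relation with the largest fail. Your closing paragraph flags a weaker concern (that $F_m=P\cdot C_m\bmod B$ with $\LM(C_m)=m$ persists through the updates), which is necessary but not sufficient; the real obstacle is the minimality of $\LM(\tilde{F}_m)$, and without it neither the invariant $S\subseteq\Staircase(\cG)$, nor your termination bound of $\Card{\Staircase(\cG)}$ bad iterations, nor the bound $O(\Card{S}+\Card{\cG})$ on the total number of pairs on which your complexity count relies, is justified.
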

\begin{proof}
  The proof is mainly based on the termination and validity of the
  \BMS algorithm. For any monomial $m\in\cT[a]$, we denote by
  $C_m^{\star}$ the last (and therefore one with the largest fail) relation made
  by the \BMS algorithm starting with $m$, if there is any.
  
  Starting with $R_1=[F_1,C_1]=[P_{\cT[a]+\cT[b]},1]$, $\LM(F_1)$
  yields exactly the fail of relation
  $C_1=C_1^{\star}$ so that, as
  in the \BMS algorithm, we know the leading monomials of the potential
  next  relations.

  Let us assume now that for any monomial $\mu\prec h$, the pair
  $R_{\mu}=[F_{\mu},C_{\mu}]$ made by the \divalgo algorithm is equivalent to
  $C_{\mu}^{\star}$, that is either both $C_{\mu}$ and $C_{\mu}^{\star}$ fail
  when shifting by exactly the same monomial or they both succeed on
  $\cT[a]+\cT[b]$.

  Since $C_{\mu}$ and $C_{\mu}^{\star}$ are equivalent, the
  current discovered staircase by the \BMS and the \divalgo
  algorithms are
  the same. Thus either $h$ is a leading monomial of a relation to be
  built by both algorithms or it is not. Without loss of generality, we
  can assume it is. There exists a monomial $m$ such that $m|h$ and
  $R_m=[F_m,C_m]$ and $C_m^{\star}$ have been made.
  In the \BMS
  algorithm, the relation $C_h^{\star}$ is obtained as
  $\frac{h}{m}\,C_m^{\star}-\sum_{\mu\prec h}
  q_{\mu}^{\star}\,C_{\mu}^{\star}$
  while in the \divalgo algorithm, $C_h$ is made as $\frac{h}{m}\,C_m-\sum_{\mu\prec h}
  q_{\mu}\,C_{\mu}$.
  In each computation, $q_{\mu}^{\star}$ and
  $q_{\mu}$ are chosen so that $C_m^{\star}$ and $C_m$ have the
  largest fail (or equivalently $\tilde{F}_m$ has the least leading monomial),
  hence $C_m^{\star}$ and $C_m$ are equivalent.
  For $h\in S$, the potential relation $C_h$ made by the algorithm
  must fail when shifted by a monomial in $S$. Thus,
  there exist $\sigma_1,\sigma_2$ such that $\sigma_1\,\sigma_2\in S$,
  $\sigma_1\,h\preceq a$, $\sigma_2\preceq b$ and the
  column labeled with $\sigma_1\,h$ of the matrix $H_{\cT[b],\cT[a]}$ is
  independent from the previous ones.
  For $g\in\LM(\cG)$, by Section~\ref{ss:multidim}, the relation $C_g$
  has been tested shifted by all the monomials in $\cT[b]+\cT[s]$, with
  $s=\max\{\sigma\in\cT,\ \sigma\,g\preceq a\}$. The theorem hypothesis
  is exactly that the full staircase is included in the set of
  tested shifts, hence we can ensure that $C_g$ corresponds to
  a kernel vector of $H_{S,S\cup\{g\}}$ with the last coordinate equal
  to $1$.

  Furthermore, as in the proof of the \BMS algorithm, the failures of
  relations $C_m$ ensure that the returned relations $C_h$ are all
  such that their leading monomials are minimal for the division. That
  is, the returned \gb is minimal.

  Concerning the complexity of the algorithm. Since $\cT[a]$ and
  $\cT[b]$ are stable by division, so is $\cT[a]+\cT[b]$. Let us
  recall that the support of $P_{\cT[a]+\cT[b]}$ is
  $\big\{\frac{M}{\tau},\ \tau\in(\cT[a]+\cT[b])\big\}$, $M=\lcm(\cT[a]+\cT[b])$.
  Since
  each $F_m$ satisfies $F_m=P_{\cT[a]+\cT[b]}\,C_m\bmod B$, then the
  monomials in the support of $F_m$ are multiples of the monomials in
  the support of $P_{\cT[a]+\cT[b]}$ and thus are included in the
  support of $P_{\cT[a]+\cT[b]}$. Each pair $R_m=[F_m,C_m]$ for
  $m\in S\cup\LM(\cG)$ must be
  reduced by all the previous ones lying in the staircase in at most
  $\Card{S}\,\Card{(\cT[a]+\cT[b])}$ operations. Reducing the
  relations to obtain a minimal \gb can be done in
  $O(\Card{S}\,\Card{\cG}\,\Card{(\cT[a]+\cT[b])})$ operations, hence this part
  is not the bottleneck of the algorithm.
\end{proof}

\begin{remark}
  Using the same notation, the \AGbb algorithm computes a border
  basis $\cB$ of the ideal of relations using
  $O\pare{\Card{S}\pare{\Card{S}+\Card{\cB}}\,\Card\pare{\cT[a]+\cT[a]}}$
  operations in the base field~\cite{Mourrain:2017:FAB:3087604.3087632}.
  Thus,
  in theory, the \AGbb and the \divalgo algorithms share the same
  complexity estimates, whenever $a=b$. 

  However, the given complexity bound is based on naive
  multivariate polynomial arithmetic. Thus the goal would be to
  investigate how to exploit fast polynomial multiplication to speed up
  the $\NormalForm$ procedure computations in the \divalgo
  algorithm.

  Let us recall that in the univariate case, complexity improvements
  are made thanks to fast Euclidean
  algorithm through a divide-and-conquer approach and using fast
  polynomial division and multiplication.

  In this multivariate setting, a divide-and-conquer
  approach has already been investigated in~\cite{NaldiN2020} relying on
  fast polynomial matrix arithmetic. Likewise, some improvements were made  
  regarding the reduction of a bivariate polynomial by the reduced \gb of the
  ideal spanned by two polynomials for $\DRL$ in~\cite{vdH:ggg}. This
  is a first step in this direction.

  A further step would be to determine the quotients fastly using, like
  in the univariate case, fast multiplication. Usually the reduction
  of a polynomial by
  several polynomials might be intricate. Yet, in our experiments, we
  observed that
  the call
  $R_h=\NormalForm(R_{m'},[R_m,R_{B_1},\ldots,R_{B_n},R_{t_1},\ldots,\allowbreak
  R_{t_r}])$
  can actually be done in several simpler steps.
  \begin{enumerate}
  \item A call to $R_h=\NormalForm(R_{m'},R_m)$ to reduce $F_m$
    \wrt $F_{m'}$.
  \item A cleaning step to remove some high-degree
    monomials in $F_h$, corresponding to
    $R_h=\NormalForm(R_h,[R_{B_1},\ldots,R_{B_n}])$. Note that the
    quotients, here, need not be stored.
  \item  A Gaussian elimination-like step to find the linear combination
    of $F_h,F_{t_1},\ldots,F_{t_r}$ with
    the smallest leading monomial. This corresponds to a call
    $R_h=\NormalForm(R_h,[R_{t_1},\ldots,R_{t_r}])$.
  \end{enumerate}
  Hence, one might
  only need to compute the first quotient, associated to $F_m$, fastly.
  
  Finally, the computation of $C_h$ is done through polynomial
  multiplications and thus benefit from any improvement thereof.
  
  We shall see in
  Section~\ref{s:bench}, that
  the \divalgo algorithm seems to perform better thanks to the
  multivariate polynomial arithmetic.
\end{remark}


\section{An adaptive variant}
\label{s:adaptive}
In some applications, the actual size of the staircase, or at least an
upper bound thereof, is known. While it provides an early termination
criterion for the \BMS, \sFGLM and \divalgo algorithms, this might
fail to drastically reduce the number of table queries.
Indeed, for the $\DRL(x_n\prec\cdots\prec x_1)$ ordering,
whether the set of leading monomials of the \gb is $\{x_n,\ldots,x_2,x_1^d\}$
or all the monomials of degree $d$:
$\{x_n^d,x_{n-1}\,x_n^{d-1},\ldots,\allowbreak x_1\,x_n^{d-1},\ldots,x_1^d\}$,
the \BMS algorithm requires to visit all the monomials up to
$x_1^{2\,d-1}$.  Therefore, it needs to visit
$\binom{n+2\,d-1}{n}$ table terms to compute a \gb
of size $n$
with a staircase of size $d$ in the former case and a \gb of size
$\binom{n+d-1}{n-1}$ with a staircase
of size $\binom{n+d-1}{n}$ in the latter. Furthermore, in some applications, like
the \spFGLM algorithm one, computing a single table term can be very
costly. Thus, requiring as few table terms as possible to retrieve the
correct ideal of relations is critical.

The \asFGLM algorithm~\cite{issac2015} was designed to minimize the
number of table queries by taking into
account the shape of the \gb gradually as it is discovered. The
algorithm starts with $S=\emptyset$. At each step, $S$ is a staircase
containing only monomials that we know are in the target staircase,
this means that the matrix $H_{S,S}$ must be full rank.
Likewise, $L$ is a set of monomials on the border of $S$. For $m$ the
smallest monomial in $L$, we check if $H_{S\cup\{m\},S\cup\{m\}}$, with
$S\cup\{m\}$ has a greater rank than $H_{S,S}$ or not. If it does
not, then the last column, labeled with $m$, must be linearly
dependent from the previous one. That is, a relation $C_m$ is found
and any multiple of $m$ is removed from $L$.
Otherwise, no relation $C_m$ with $\LM(C_m)=m$ must exist. Thus, 
$m$ is added to the staircase $S$, removed from $L$ and monomials
$m\,x_i$ are added to $L$.

\begin{example}
  Let us consider the sequence $\bseq=(p_{i_0+1})_{\bi\in\N^n}$ where
  $p_{i_0+1}$ stands for the $(i_0+1)$st prime number if $i_0<d$ and
  $0$ otherwise.
  For $\DRL$, or even $\LEX$, the \asFGLM algorithm
  computes the rank of the following matrices
  \begin{itemize}
  \item $H_{\{1\},\{1\}}=\pare{
      \begin{smallmatrix}
        2
      \end{smallmatrix}}$. Its rank is $1$, the dimension of the matrix, so $1\in S$;
  \item $H_{\{1,x_n\},\{1,x_n\}}=\pare{
      \begin{smallmatrix}
        2	&2\\
        2	&2
      \end{smallmatrix}}$. Its rank is $1$ which is not the dimension
    of the matrix so a relation $C_{x_n}$ is found. This is $C_{x_n}=x_n-1$.
  \item $H_{\{1,x_{n-1}\},\{1,x_{n-1}\}}=\cdots=H_{\{1,x_2\},\{1,x_2\}}=\pare{
      \begin{smallmatrix}
        2	&2\\
        2	&2
      \end{smallmatrix}}$. Their ranks are also $1$ which are not the
    dimensions of the matrices so the relations
    $C_{x_{n-1}}=x_{n-1}-1,\ldots,C_{x_2}=x_2-1$ are found.
  \item $H_{\{1,x_1\},\{1,x_1\}}=\pare{
      \begin{smallmatrix}
        2	&3\\
        3	&5
      \end{smallmatrix}}$. Its rank is $2$, the dimension of the
    matrix. Thus, $x_1\in S$.
  \item $H_{\{1,x_1,x_1^2\},\{1,x_1,x_1^2\}}=\pare{
      \begin{smallmatrix}
        2	&3	&5\\
        3	&5	&7\\
        5	&7	&11
      \end{smallmatrix}}$. Its rank is $3$, the dimension of the
    matrix. Therefore, $x_1^2\in S$.
  \item \ldots;
  \item $H_{\{1,x_1,\ldots,x_1^{d-1}\},\{1,x_1,\ldots,x_1^{d-1}\}}=\pare{
      \begin{smallmatrix}
        2	&3	&5	&\cdots	&p_d\\
        3	&5	&7	&\cdots	&0\\
        5	&7	&11	&\cdots	&0\\
        \vdots	&\vdots	&\vdots	&\ddots	&\vdots\\
        p_d	&0	&0	&\cdots	&0
      \end{smallmatrix}}$. Its rank is $d$, the dimension of the
    matrix, so $x_1^d\in S$.
  \item $H_{\{1,x_1,\ldots,x_1^d\},\{1,x_1,\ldots,x_1^d\}}=\pare{
      \begin{smallmatrix}
        2	&3	&5	&\cdots	&p_d	&0\\
        3	&5	&7	&\cdots	&0	&0\\
        5	&7	&11	&\cdots	&0	&0\\
        \vdots	&\vdots	&\vdots	&\ddots	&\vdots	&\vdots\\
        p_d	&0	&0	&\cdots	&0	&0\\
        0	&0	&0	&\cdots	&0	&0
      \end{smallmatrix}}$. Its rank is also $d$, which is not the
    dimension of the matrix. Thus, the relation
    $C_{x_1^d}=x_1^d$ is found.
  \end{itemize}
  It thus requires merely $2\,(n+d)-1$ table terms instead of
  $\binom{n+2\,d-1}{n}$.
\end{example}
As described in Sections~\ref{s:Mat2Pol} and~\ref{s:division}, the
\divalgo algorithm is based on polynomials from matrices with columns set $\cT[a]$
and rows set $\cT[b]$. However, here, we need matrices with more
general sets of monomials for the rows and columns. Therefore, the
main tool of the adaptive variant of the \divalgo algorithm is new
basic routines so that we
can perform polynomial divisions while also ensuring that our
polynomials are related to
multi-Hankel matrices with these columns and rows sets of
monomials. Since at each step, $S$ is a staircase and
$m$ is a monomial lying on its border so that $S\cup\{m\}$ is also a
staircase, then the corresponding matrices would be $H_{S,S}$ and
$H_{S\cup\{m\},S\cup\{m\}}$. Therefore, any instance of $T+U$ from the
previous sections will be replaced by $2\,S\coloneqq S+S$, the
Minkowski sum of $S$ with itself, or likewise by $2\,(S\cup\{m\})$.

At each step, we have the polynomial $P_{2\,S}$, associated to matrix
$H_{S,S}$, a monomial ideal
$B_{2\,S}$, determined as in Section~\ref{ss:multidim}, and pairs
$R_{2\,S,t}=[F_{2\,S,t},C_t]=[P_{2\,S}\,C_t\bmod B_{2\,S},C_t]$. At the next step, we
compute $P_{2\,(S\cup\{m\})}$ from $P_{2\,S}$ by shifting it by
$\frac{\LCM(S\cup\{m\})^2}{\LCM(S)^2}$
and adding the missing terms, update $B_{2\,S}$ into
$B_{2\,(S\cup\{m\})}$ and likewise update each
$R_{2\,(S\cup\{m\}),t}$ by shifting $F_{2\,S,t}$ and adding the
missing terms to make $F_{2\,(S\cup\{m\}),t}$. It remains to compute
$R_{2\,(S\cup\{m\}),m}$ such that $\supp C_m\subseteq S\cup\{m\}$ and
$F_{2\,(S\cup\{m\})}$ is sufficiently ``small'' to ensure that $C_m$ is a
valid relation or not. This is where two approaches are possible. The
first one is the
naive one; it was proposed in~\cite{issac2018} and is recalled in
Section~\ref{ss:naive_adaptive}. It is based on polynomial arithmetic, yet does not
use as many polynomial divisions as the \divalgo does.
The other one is presented in
Section~\ref{ss:adivalgo} and fully uses polynomial divisions to
perform the computations. In particular, we design some new
basic routines to do so. This yields
the \Adivalgo algorithm, or Algorithm~\ref{algo:Adivalgo}.

\subsection{A naive approach}
\label{ss:naive_adaptive}
In this naive approach, the only remaining things are the initialization
and reduction of $R_{2\,(S\cup\{m\}),m}$ and then testing if $C_m$ is valid.

The pair $R_{2\,(S\cup\{m\}),m}$ is 
initialized as $m\,R_{2\,(S\cup\{m\}),1}$ and then reduced, as in
Section~\ref{s:division}, by
$R_{2\,(S\cup\{m\}),B_1},\allowbreak\ldots,R_{2\,(S\cup\{m\}),B_n}$, where
$B_1,\ldots,\allowbreak B_n\in B_{2\,S}$. Then, we can reduce it
incrementally by all the other $R_{2\,(S\cup\{m\}),t}$, which comes
down to just subtracting a constant multiple of them.

Finally, when a valid relation $C_g$ is found, any multiple of $g$ is removed from
the set of potential monomials to add to $S$. Moreover, we
can further reduce a future relation
$R_{2\,(S\cup\{m\}),m}=[F_{2\,(S\cup\{m\}),m},C_m]$ with any pair 
$\Big[\frac{M}{\mu\,g},0\Big]$, $m\succeq\mu\,g$ to clean the support
of $F_{2\,(S\cup\{m\})}$. As $\frac{M}{\mu\,g}$ is a single monomial,
this can be done easily, not unlike the reductions by
$R_{2\,(S\cup\{m\}),B_1},\ldots,\allowbreak R_{2\,(S\cup\{m\}),B_n}$.

\subsection{A division-based adaptive variant}\label{ss:adivalgo}
In~\cite{issac2018}, we did not go further in
the design of an adaptive variant of the \divalgo. In particular, that
version could not initialize a pair $R_{2\,(S\cup\{m\}),m}$ as the quotient of two
pairs of polynomials. In the following part of this section, we show
how to initialize a new pair as the quotient of two previously
computed pairs of polynomials.

The algorithm uses some new procedures that are not needed in
\divalgo. Indeed, at step $m$, when computing
$R_{2\,S',m}=[F_{2\,S',m},C_m]$, with $S'=S\cup\{m\}$, we can at first only ensure that
$\supp C_m\subseteq\cT[m]$. Yet, we need to have $\supp C_m\subseteq
S'=\cT[m]\setminus\langle\LM(C_g),g\in G\rangle$. Thus, we need to
reduce $C_m$ by all the already computed $C_g$'s.
\begin{description}
\item
  [$\NormalFormRightSide(R_{2\,S,m},{[R_{2\,S,g_1},\ldots,R_{2\,S,g_r}]})$]
  that computes
  the quotients $Q_{g_1},\ldots,Q_{g_r}$ of the division of $C_m$ by
  $C_{g_1},\ldots,C_{g_r}$, with $R_{2\,S,g_1},\ldots,R_{2\,S,g_r}\in G$ and
  then returns $R_{2\,S,m}-Q_{g_1}\,R_{2\,S,g_1}-\cdots-Q_{g_r}\,R_{2\,S,g_r}$.

\item [$\NormalFormHigherPart
  (R_{2\,S,m},{[R_{2\,S,s_1},\ldots,R_{2\,S,s_q}]})$] that behaves like $\NormalForm$, except
  only the higher part $\tilde{F}_{2\,S',t}$ of a polynomial
  $F_{2\,S',t}$ is used. For $t=B_i$,
  $\tilde{F}_{2\,S',t}=F_{2\,S',t}=B_i$, otherwise
  $\tilde{F}_{2\,S',t}$ is obtained from $F_{2\,S',t}$ by removing any
  monomial dividing $\frac{\LCM(S')^2}{g}$ with $g\in\LM(\cG)$. Then
  $\NormalFormHigherPart
  (R_{2\,S',m},[R_{2\,S',s_1},\ldots,R_{2\,S',s_q}])$
  computes the normal form of $\tilde{F}_{2\,S',m}$ with respect to
  $[\tilde{F}_{2\,S',s_1},\ldots,\tilde{F}_{2\,S',s_q}]$ and the
  corresponding quotients $Q_1,\ldots,Q_q$. It then returns
  $R_{2\,S',m}-Q_1\,R_{2\,S',s_1}-\cdots-Q_q\,R_{2\,S',s_q}$.
\end{description}
The definition of $\tilde{F}_{2\,S',m}$ extends the one used in
Section~\ref{sss:sfglm}. The rationale is the same: using the leading
monomial of $F_{2\,S',m}$ to check whether $C_m$ is a valid relation
or not is not the same as checking if the last column of $H_{S',S'}$
linearly depends from the previous ones. Indeed, the leading monomial
might correspond to a row that is not present in $H_{S',S'}$. Since
$S'$ does not contain any monomial that is a multiple of $g$ with
$R_g\in G$, then it makes sense to remove any monomial dividing
$\frac{\LCM(S')^2}{g}$, with $R_g$ still in $G$.

\begin{algorithm2e}[htbp!]
  \small
  \DontPrintSemicolon
  \TitleOfAlgo{\Adivalgo\label{algo:Adivalgo}}
  \KwIn{A table $\bseq=(\seq_{\bi})_{\bi\in\N^n}$ with coefficients in
    $\K$, a monomial ordering $\prec$.}
  \KwOut{A \gb $G$ of the ideal of relations of $\bseq$ for $\prec$.}
  $L \coloneqq \{1\}$.\;
  $G \coloneqq \emptyset$, $S \coloneqq \emptyset$. \tcp*{the future
    \gb and staircase}
  $B \coloneqq \{B_1,\ldots,B_n\}=\{x_1^0,\ldots,x_n^0\}$.\;
  $R_{2\,S,B_1} \coloneqq [B_1,0],\ldots, R_{2\,S,B_n} \coloneqq [B_n,0]$.\;
  \While{$L\neq\emptyset$}{
    $m:= $ first element of $L$ and remove it from $L$.\;
    $S':= S\cup\{m\}$.\;
    Update $B$ and all pairs $R_{2\,S,t}$ into $R_{2\,S',t}$ for $t\in S\cup
    G\cup B$.\;
    \lIf{$m=1$}{
      $R_1 \coloneqq [w_{0,\ldots,0},1]$.
    }
    \uElseIf{$m=\mu\,x_i^2$ with $\mu,\mu\,x_i\in S$}{
      $R_{2\,S',m} \coloneqq \NormalFormHigherPart
      (R_{2\,S',\mu},[R_{2\,S',\mu\,x_i},R_{2\,S',B_1},\ldots,R_{2\,S',B_n}])$.
    }
    \Else(\tcp*[f]{$m=\mu\,x_i$ with $\mu\in S$}){
      $R_{2\,S',m} \coloneqq \NormalForm
      (x_i\,R_{2\,S',\mu},[R_{2\,S',B_1},\ldots,R_{2\,S',B_n}])$.
    }
    $R_{2\,S',m} \coloneqq \NormalFormRightSide
    (R_{2\,S',m},[R_{2\,S',g_1},\ldots,R_{2\,S',g_r}])$.\tcp*{$g_1,\ldots,g_r\in G$}
    $R_{2\,S',m} \coloneqq \NormalForm (R_{2\,S',m},[R_{2\,S',B_1},\ldots,R_{2\,S',B_n}])$.\;
    $R_{2\,S',m} \coloneqq \NormalFormHigherPart
    (R_{2\,S',m},[R_{2\,S',s_1},\ldots,R_{2\,S',s_q}])$.\tcp*{$s_1,\ldots,s_q\in S$}
    $R_{2\,S',m}=[F_{2\,S',m},C_m]$.\;
    \uIf(\tcp*[f]{Relation succeeds}){$\LM(\tilde{F}_{2\,S',m})
      \prec\frac{\LCM(S')^2}{m}$}{
      Update $R_{2\,S',m}$ into $R_{2\,S,m}$.\;
      $G \coloneqq G\cup\{R_{2\,S,m}\}$.\;
      Remove multiples of $m$ in $L$.\;
    }
    \Else(\tcp*[f]{Relation fails}){
      Delete every pair $R_{2\,S,t}$ for $t\in S\cup G\cup B$.\;
      $S \coloneqq S'$.\;
      $L \coloneqq L\cup\{x_1\,m,\ldots,x_n\,m\}$, remove any
      multiples of $\LM(G)$ and sort it by
      increasing order.
    }
  }
  \Return $G$.
\end{algorithm2e}
\begin{remark}
  To simplify the presentation, we did not consider the case where
  some sequence terms might not be available. This can happen for
  instance in the error correcting code application.
  Likewise, if the sequence is not linear recurrent, then an infinite
  loop might happen. Both situations require an easy modification of
  the algorithm.
\end{remark}

For the correctness of the algorithm, we need to prove the following lemma.
\begin{lemma}
  Let $S$ be the current computed staircase and let $\cG$ be the
  current computed \gb. Let $m$ be the least monomial not in $S$ and
  not divisible by $\LM(\cG)$ and let $S'=S\cup\{m\}$. Then, for all $t$,
  $\LM(\tilde{F}_{2\,S',t})=\frac{\LCM(S')^2}{\LCM(S)^2}\,\LM(\tilde{F}_{2\,S,t})$ and
  $\LM(\tilde{F}_{2\,S',m})\preceq\frac{\LCM(S')^2}{m}$.
  Furthermore, this weak inequality is an equality if, and only if, $C_m$ is not a
  valid relation.
\end{lemma}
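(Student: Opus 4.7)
My plan is to read off everything from the decomposition $2\,S' = 2\,S \cup (S+\{m\}) \cup \{m^2\}$ together with the definition of the mirror polynomial. Let $L=\LCM(S)$ and $L'=\LCM(S')$; because the Minkowski sum of a set with itself doubles its exponent-wise maximum, $\LCM(2\,S)=L^2$ and $\LCM(2\,S')=L'^2$, so
\[
P_{2\,S'} \;=\; \frac{L'^2}{L^2}\,P_{2\,S} \;+\; \Delta, \qquad \Delta \;=\; \sum_{\tau\in 2\,S'\setminus 2\,S}[\tau]\,\frac{L'^2}{\tau}.
\]
Since $B_{2\,S'}$ is obtained from $B_{2\,S}$ by the same rescaling of its generators, the update step of Algorithm~\ref{algo:Adivalgo} then amounts to the congruence
\[
F_{2\,S',t} \;\equiv\; \frac{L'^2}{L^2}\,F_{2\,S,t} + \Delta\,C_t \pmod{B_{2\,S'}}.
\]
The first step of the proof is to make this identity rigorous; the rest is careful bookkeeping on supports.

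For the first claim I would show that, for $t\in S\cup \cG\cup B$, every monomial appearing in the correction $\Delta\,C_t$ that survives the tildeing is strictly below $\frac{L'^2}{L^2}\,\LM(\tilde F_{2\,S,t})$. Each $\tau$ in $\supp\Delta$ lies in $(S+\{m\})\cup\{m^2\}$, so it is divisible by $m$; consequently every monomial in $\Delta\,C_t$ is a divisor of some $L'^2\,s/m$ with $s\in\supp C_t$. Using the minimality hypothesis on $m$ (the least monomial outside $S$ and outside $\langle\LM(\cG)\rangle$), I would argue case by case that any such monomial either divides $L'^2/g$ for some $g\in\LM(\cG)$ -- in which case it is erased by tildeing -- or lies strictly below $\frac{L'^2}{L^2}\LM(\tilde F_{2\,S,t})$. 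Meanwhile the term $\frac{L'^2}{L^2}\LM(\tilde F_{2\,S,t})$ survives in $\tilde F_{2\,S',t}$ with unchanged coefficient, giving the claimed equality.

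For the second claim I would trace the three initialization branches of $R_{2\,S',m}$ in Algorithm~\ref{algo:Adivalgo}. In each branch the resulting polynomial $F_{2\,S',m}\equiv P_{2\,S'}\,C_m\pmod{B_{2\,S'}}$ has support contained in $\{L'^2 s/\tau : s\in\supp C_m,\ \tau\in 2\,S'\}$, and the largest element of this set that is not killed modulo $B_{2\,S'}$ is $L'^2/m$, obtained by pairing $s=m$ with $\tau=m^2$. Any candidate strictly above $L'^2/m$ is removed by the subsequent $\NormalForm$ and $\NormalFormHigherPart$ reductions; by the first claim applied to the divisors used there, no such reduction introduces a higher monomial. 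This yields $\LM(\tilde F_{2\,S',m})\preceq L'^2/m$.

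To finish, I would identify the coefficient of $L'^2/m$ in $\tilde F_{2\,S',m}$ with the entry of $H_{S',S'}$ expressing the dependence of the column labeled $m$ on the columns labeled by $S$; concretely this coefficient is the value $\sum_{s\in\supp C_m}c_s\,[m\cdot s]$, which is exactly the \asFGLM test deciding whether $C_m$ is a valid relation. Hence $\LM(\tilde F_{2\,S',m})\prec L'^2/m$ iff $C_m$ is valid, and equality holds iff $C_m$ fails. The main obstacle is the first claim: one must combine the minimality of $m$ with the precise interaction between the tildeing relative to $L'$ and the new terms in $\Delta$ to ensure that all correction monomials either drop below the rescaled leading term or are removed by the tilde operation.
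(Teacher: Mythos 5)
Your overall architecture is sound and, for the parts the paper actually proves, you follow the same route: the coefficient of $\LCM(S')^2/m$ in $\tilde F_{2\,S',m}$ is identified with the last entry of the matrix-vector product $H_{S',S'}$ against the coefficient vector of $C_m$, and the bound $\LM(\tilde F_{2\,S',m})\preceq\LCM(S')^2/m$ follows because, after the $\NormalFormHigherPart$ reductions, the leading monomial can be written as $\LCM(S')^2/\sigma$ with $\sigma\notin S$ (else it would have been reduced away), and the tildeing forces $\sigma$ to be outside $\langle\LM(\cG)\rangle$, whence $\sigma\succeq m$ by minimality. These two pieces match the paper's proof, which only addresses the claims involving $m$.

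Where you genuinely go beyond the paper is the first claim, for all $t\in S\cup\cG$, and here your argument is a sketch with a real gap. You assert that each monomial of $\Delta\,C_t$ that survives the tildeing lies strictly below $\frac{\LCM(S')^2}{\LCM(S)^2}\LM(\tilde F_{2\,S,t})$, arguing from the fact that every $\tau\in\supp\Delta^{-1}$ (rather, every $\tau$ with $\LCM(S')^2/\tau\in\supp\Delta$) is divisible by $m$. This alone is not enough: write a surviving monomial as $\LCM(S')^2\,s/\tau$ with $s\in\supp C_t$, $\tau\in 2\,S'\setminus 2\,S$ and $s\mid\tau$; after cancellation the quotient $\sigma=\tau/s$ need no longer be divisible by $m$, so you cannot conclude $\sigma\succeq m$ directly, and $\sigma\in S$ with $\sigma\preceq t$ is a priori possible. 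What actually saves the argument is a more refined description of $2\,S'\setminus 2\,S$ (many products $s'm$ are already in $2\,S$, which shrinks $\supp\Delta$ substantially), combined with the observation that any surviving $\sigma$ divisible by $m$ is automatically $\succ t$, and any surviving $\sigma$ not divisible by $m$ is either divisible by some $g\in\LM(\cG)$ (hence erased by the tilde) or forced into $S$ in a way that is incompatible with $s\preceq t$. You signal awareness of this obstacle in your closing paragraph, so the gap is one of execution, not of plan; but as written, the case analysis is not carried out and the first claim is not established. Note finally that the paper's own proof silently skips the first claim as well, so your attempt, once completed, would be a genuine improvement over the published argument.
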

\begin{proof}
  If $\cG=\emptyset$, then the result is clear as
  $\tilde{F}_{2\,S',m}=F_{2\,S',m}$.
  Otherwise, the leading monomial of $\tilde{F}_{2\,S',m}$ is less than
  $\frac{\LCM(S')}{t}$ with $t\in S$ and cannot be a divisor of
  $\frac{\LCM(S')}{g}$ with $g\in\LM(\cG)$. Hence it must be at most
  $\frac{\LCM(S')}{m}$.

  Furthermore, for $S=\{1,\ldots,s_q\}$, the coefficient of
  $\frac{\LCM(S')}{m}$ can be read as the last coefficient of this product
  \[\kbordermatrix{
    	&1	&\cdots	&s_q	&m\\
  1	&[1]	&\cdots	&[s_q]	&[m]\\
  \vdots&\vdots	&	&\vdots	&\vdots\\
  s_q	&[s_q]	&\cdots	&[s_q^2]&[m\,s_q]\\
  m	&[m]	&\cdots	&[m\,s_q]&[m^2]
  }\,
  \begin{pmatrix}
    \alpha_1\\\vdots\\\alpha_{s_q}\\1
  \end{pmatrix}=
  \begin{pmatrix}
    0\\\vdots\\0\\f_{2\,S',m}
  \end{pmatrix}.\]
  Hence, the relation is valid if, and only if, this coefficient is zero.
\end{proof}

The leading monomial of
the higher part of $F_{2\,S',m}$ tells us whether $C_m$ is a valid
relation or not. 

\begin{theorem}\label{th:Adivalgo}
  Assuming the \Adivalgo algorithm called on table $\bseq$ returns a \gb $\cG$
  with staircase $S$, then the algorithm does not need more that
  $\Card{2\,(S\cup\LM(\cG))}$ table queries to recover $\cG$.

  Furthermore, it performs at most $O((\Card{S}+\Card{\cG})^2\,\Card{2\,S})$
  operations to recover $\cG$.
\end{theorem}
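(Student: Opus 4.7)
The plan is to analyze the main \textbf{while} loop of the \Adivalgo algorithm and to exploit the invariant that every monomial $m$ extracted from $L$ is either inserted into the staircase (the failure branch) or contributes a new leading term to $\cG$ (the success branch). In both cases $m$ ends up in $S\cup\LM(\cG)$ at termination, so the total number of iterations of the main loop is exactly $\Card{S}+\Card{\cG}$. Moreover, at the iteration processing $m$, the current staircase $S_m$ satisfies $S_m\cup\{m\}\subseteq S\cup\LM(\cG)$, a fact that will be used repeatedly below.

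For the table-queries bound, I would argue that the algorithm reads sequence terms only when it refreshes the pairs $R_{2\,S_m,t}$ into $R_{2\,S',t}$ with $S'=S_m\cup\{m\}$. Each such refresh only needs the coefficients of $P_{2\,S'}=\sum_{\tau\in 2\,S'}[\tau]\,\LCM(S')^2/\tau$, that is sequence terms indexed by $2\,S'$. Since $S'\subseteq S\cup\LM(\cG)$, the union over all iterations of the queried indices is contained in $2\,(S\cup\LM(\cG))$, which yields the announced query bound.

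For the arithmetic bound, I would bound the cost of a single iteration. Each pair manipulated at this iteration has an $F$-component supported in $\{\LCM(S')^2/\tau,\ \tau\in 2\,S'\}$, hence on at most $\Card{2\,S'}\le\Card{2\,S}$ monomials, while the number of reducers used in the successive calls to $\NormalForm$, $\NormalFormRightSide$ and $\NormalFormHigherPart$ is bounded by $\Card{S'\cup\cG\cup B}\in O(\Card{S}+\Card{\cG})$. Each reduction step therefore costs $O(\Card{2\,S})$ arithmetic operations, and one iteration costs $O((\Card{S}+\Card{\cG})\,\Card{2\,S})$. Multiplying by the $\Card{S}+\Card{\cG}$ iterations yields the claimed total $O((\Card{S}+\Card{\cG})^2\,\Card{2\,S})$.

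The delicate point will be to verify that the implicit update step ``Update $B$ and all pairs $R_{2\,S,t}$ into $R_{2\,S',t}$ for $t\in S\cup G\cup B$'' also fits within this budget: such an update amounts to shifting every $F_{2\,S,t}$ by the monomial $\LCM(S')^2/\LCM(S)^2$ and filling in the coefficients coming from the newly queried sequence entries, which costs $O(\Card{S\cup\cG\cup B}\,\Card{2\,S'})$ per iteration and is therefore absorbed in the per-iteration budget. A secondary subtlety, which I would address by invoking the same support reasoning as in the proof of Proposition~\ref{prop:lt}, is that the extraction of $\tilde{F}_{2\,S',m}$ inside $\NormalFormHigherPart$ does not enlarge the supports beyond $2\,S'$, so the $\Card{2\,S}$ bound on each polynomial length is preserved throughout.
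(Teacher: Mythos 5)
Your proposal follows essentially the same structure as the paper's own proof: you bound the number of main-loop iterations by $\Card{S}+\Card{\cG}$, observe that each iteration's polynomial manipulations involve supports inside $2\,S'\subseteq 2\,(S\cup\LM(\cG))$ (giving the table-query bound), and bound the per-iteration reduction cost by $O((\Card{S}+\Card{\cG})\,\Card{2\,S})$ before multiplying out. The additional remarks about the pair-update step and the $\tilde F$ extraction make implicit steps of the paper's argument explicit, but the decomposition and key bounds are the same.
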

\begin{proof}
  Since $\tilde{F}_{2\,S,t}$ is obtained from $F_{2\,S,t}$ by removing
  all the
  monomials dividing $\frac{\LCM(S)^2}{g}$, $g\in\LM(\cG)$, then
  $\tilde{F}_{2\,S,t}$ is actually the polynomial obtained from the
  product of the extended matrix $H_{T,S}$ and the vector representing
  $C_t$, where $T=2\,S\setminus\{g\,\tau, g\in\LM(\cG),\tau\in\cT\}$. Hence,
  updating from $\tilde{F}_{2\,S,t}$ to $\tilde{F}_{2\,S',t}$ is
  equivalent to updating $H_{T,S}$ to $H_{T',S'}$, with
  $T'=2\,S'\setminus\{g\,\tau, g\in\LM(\cG),\tau\in\cT\}$ in this matrix-vector
  product. Yet, the first nonzero coefficient of this product remains
  the same through this updating process.

  At step $m$, $F_{2\,S',t}=P_{2\,S'}\,C_t\bmod B_{2\,S'}$ for any
  $t\in S\cup\LM(\cG)$. As in the end of the step, $m$ is either found
  to be a member of the final $S$ or the final $\LM(\cG)$, then
  $S'\subseteq(S\cup\LM(\cG))$ and thus $2\,S'\subseteq
  2\,(S\cup\LM(\cG))$.

  At step $m$, a pair $R_{2\,S,t}=[F_{2\,S,t},C_t]$ is updated into
  $R_{2\,(S\cup\{m\}),t}$ by adding terms with support in
  $2\,(S\cup\{m\})$, so in the last step, all the polynomials have
  support in a set of size $\Card{2\,(S\cup\{g\})}$ where $g\in\LM(\cG)$.

  Furthermore, for each monomial $t\in S\cup\LM(\cG)$, the pair
  $R_{2\,S',t}$ is reduced by all the previous ones lying in the
  staircase or the \gb in at most $(\Card{S}+\Card{\cG})\,\Card{2\,S}$
  operations. Hence, at most $O((\Card{S}+\Card{\cG})^2\,\Card{2\,S})$
  operations are needed.
\end{proof}

\begin{example}
  We consider the following sequence $\bseq=(\seq_{i,j})_{(i,j)\in\N^2}$
  \[\bseq=
  \begin{pmatrix}
    6	&9	&5	&1	&10	&-6	&-9	&\cdots\\
    3	&12	&2	&4	&7	&-3	&-12	&\cdots\\
    6	&9	&5	&1	&10	&-6	&-9	&\cdots\\
    3	&12	&2	&4	&7	&-3	&-12	&\cdots\\
    \vdots	&\vdots	&\vdots	&\vdots	&\vdots	&\vdots	&\vdots	&\ddots
  \end{pmatrix}\]
  For a pair $R_{2\,S,t}=[F_{2\,S,t},C_t]$, we let
  $\tilde{R}_{2\,S,t}=[\tilde{F}_{2\,S,t},C_t]$.

  We start with $L=\{1\}$, $G=S=\emptyset$, $B=(1,1)$ and
  $R_{2\,S,B_1}=[1,0]$, $R_{2\,S,B_2}=[1,0]$.
  \begin{enumerate}
  \item 
    We set $m=1$, $S'=\{1\}$, $R_{2\,S',B_1}=[1,0]$,
    $R_{2\,S',B_2}=[y,0]$ and initialize
    $R_{2\,S',1}=[6,1]$.

    As $\tilde{F}_{2\,S',1}=F_{2\,S',1}=1$ and
    $\LM(\tilde{F}_{2\,S',1})=1=\frac{\LCM(S')^2}{1}$ the relation
    fails.

    $L$ is updated to $\{y,x\}$.
  \item We set $m=y$, $S'=\{1,y\}$, $R_{2\,S',B_1}=[x,0]$,
    $R_{2\,S',B_2}=[y^3,0]$ and $R_{2\,S',1}=[6\,y^2+9\,y+5,1]=\tilde{R}_{2\,S',1}$.

    We
    initialize $R_{2\,S',y}=[9\,y^2+5\,y,y]$ using $\NormalForm$ on
    $y\,R_{2\,S',1}$ and then reduce it to
    $[-\frac{17}{2}\,y-\frac{15}{2},y-\frac{3}{2}]$.

    As $\tilde{F}_{2\,S',y}=F_{2\,S',y}$ and
    $\LM(\tilde{F}_{2\,S',y})=\frac{\LCM(S')^2}{y}$ the relation
    fails.

    $L$ is updated to $\{x,y^2\}$.
  \item We set $m=x$, $S'=\{1,y,x\}$, $R_{2\,S',B_1}=[x^3,0]$,
    $R_{2\,S',B_2}=[y^3,0]$,
    \begin{itemize}
    \item
      $R_{2\,S',1}=[6\,x^2\,y^2+9\,x^2\,y+3\,x\,y^2+5\,x^2+12\,x\,y+6\,y^2,1]
      =\tilde{R}_{2\,S',1}$,
    \item $R_{2\,S',y}=[-\frac{17}{2}\,x^2\,y+\frac{15}{2}\,x\,y^2-\frac{15}{2}\,x^2
      -18\,x\,y-9\,y^2,y-\frac{3}{2}]=\tilde{R}_{2\,S',y}$.
    \end{itemize}
    We
    initialize
    $R_{2\,S',x}=[3\,x^2\,y^2+12\,x^2\,y+6\,x\,y^2, x]$ using
    $\NormalForm$ on $x\,R_{2\,S',1}$ 
    and then reduce it to
    $[\frac{189}{17}\,x\,y^2-\frac{155}{17}\,x^2-\frac{372}{17}\,x\,y
    -\frac{186}{17}\,y^2, x+\frac{15}{17}\,y-\frac{31}{17}]$.

    As $\tilde{F}_{2\,S',x}=F_{2\,S',x}$ and
    $\LM(\tilde{F}_{2\,S',x})=\frac{\LCM(S')^2}{x}$ the relation
    fails.
    
    $L$ is updated to $\{y^2,x\,y,x^2\}$.
  \item We set $m=y^2$, $S'=\{1,y,x,y^2\}$, $R_{2\,S',B_1}=[x^3,0]$,
    $R_{2\,S',B_2}=[y^5,0]$,
    \begin{itemize}
    \item $R_{2\,S',1}=[6\,x^2\,y^4+\cdots,1]=\tilde{R}_{2\,S',1}$,
    \item
      $R_{2\,S',y}=[-\frac{17}{2}\,x^2\,y^3+\cdots,y-\frac{3}{2}]=\tilde{R}_{2\,S',y}$,
    \item $R_{2\,S',x}=[\frac{189}{17}\,x\,y^4+\cdots,
      x+\frac{15}{17}\,y-\frac{31}{17}]=\tilde{R}_{2\,S',x}$.
    \end{itemize}
    We
    initialize
    $R_{2\,S',y^2}=[-\frac{106}{17}\,x\,y^4+\cdots,y^2-\frac{13}{17}\,y+\frac{16}{51}]$
    using $\NormalFormHigherPart$ on $R_{2\,S',1}$ 
    and $R_{2\,S',y}$
    and then reduce it to
    $[\frac{1381}{189}\,x^2\,y^2+\cdots, y^2+\frac{106}{189}\,x
    -\frac{17}{63}\,y-\frac{134}{189}]$.

    As $\tilde{F}_{2\,S',y^2}=F_{2\,S',y^2}$ and
    $\LM(\tilde{F}_{2\,S',y^2})=\frac{\LCM(S')^2}{y^2}$ the relation
    fails.
    
    $L$ is updated to $\{x\,y,x^2,y^3\}$.
  \item We set $m=x\,y$, $S'=\{1,y,x,y^2,x\,y\}$,
    $R_{2\,S',B_1}=[x^3,0]$,
    $R_{2\,S',B_2}=[y^5,0]$,
    \begin{itemize}
    \item $R_{2\,S',1}=[6\,x^2\,y^4+\cdots,1]=\tilde{R}_{2\,S',1}$,
    \item
      $R_{2\,S',y}=[-\frac{17}{2}\,x^2\,y^3+\cdots,y-\frac{3}{2}]=\tilde{R}_{2\,S',y}$,
    \item $R_{2\,S',x}=[\frac{189}{17}\,x\,y^4+\cdots,
      x+\frac{15}{17}\,y-\frac{31}{17}]=\tilde{R}_{2\,S',x}$,
    \item $R_{2\,S',y^2}=[\frac{1381}{189}\,x^2\,y^2+\cdots,
      y^2+\frac{106}{189}\,x -\frac{17}{63}\,y-\frac{134}{189}]=\tilde{R}_{2\,S',y^2}$.
    \end{itemize}
    We
    initialize
    $R_{2\,S',x\,y}=[-\frac{15}{2}\,x^2\,y^4+\cdots,x\,y-\frac{3}{2}\,x]$
    using $\NormalForm$ on $x\,R_{2\,S',y}$
    and then reduce it to
    $[-15\,y^4-7\,x^2\,y-x\,y^2-14\,y^3-10\,x^2-4\,x\,y-5\,y^2, x\,y+x-y-1]$.

    As $\tilde{F}_{2\,S',x\,y}=F_{2\,S',x\,y}$ and
    $\LM(\tilde{F}_{2\,S',x\,y})=y^4\prec\frac{\LCM(S')^2}{x\,y}$ the relation
    succeeds!

    We update $R_{2\,S,x\,y}=[-9\,x\,y^4+\cdots,x\,y+x-y-1]$ and put
    it in $G$.

  \item We set $m=x^2$, $S'=\{1,y,x,y^2,x^2\}$,
    $R_{2\,S',B_1}=[x^5,0]$, $R_{2\,S',B_2}=[y^5,0]$

    \begin{itemize}
    \item $R_{2\,S',1}=[6\,x^4\,y^4+\cdots,1]$, $\tilde{R}_{2\,S',1}=[6\,x^4\,y^4+\cdots,1]$,
    \item
      $R_{2\,S',y}=[-\frac{17}{2}\,x^4\,y^3-+\cdots,y-\frac{3}{2}]$,
      $\tilde{R}_{2\,S',y}=[-\frac{17}{2}\,x^4\,y^3-+\cdots,y-\frac{3}{2}]$,
    \item $R_{2\,S',x}=[\frac{189}{17}\,x^3\,y^4+\cdots,
      x+\frac{15}{17}\,y-\frac{31}{17}]$,
      $\tilde{R}_{2\,S',x}=[\frac{189}{17}\,x^3\,y^4+\cdots,
      x+\frac{15}{17}\,y-\frac{31}{17}]$,
    \item $R_{2\,S',y^2}=[\frac{1381}{189}\,x^4\,y^2+\cdots,
      y^2+\frac{106}{189}\,x -\frac{17}{63}\,y-\frac{134}{189}]$,
      $\tilde{R}_{2\,S',y^2}=[\frac{1381}{189}\,x^4\,y^2+\cdots,
      y^2+\frac{106}{189}\,x -\frac{17}{63}\,y-\frac{134}{189}]$,
    \item $R_{2\,S',x\,y}=[-4\,x^4\,y^2+\cdots,x\,y+x-y-1]$.
    \end{itemize}
    We
    initialize
    $R_{2\,S',x^2}=[-\frac{945}{34}\,x^4\,y^3-x^4\,y^3+\cdots,
    x^2+\frac{15}{17}\,x\,y-\frac{45}{34}\,x+\frac{15}{34}\,y-\frac{47}{17}]$
    using $\NormalFormHigherPart$ on $\tilde{R}_{2\,S',1}$ and $\tilde{R}_{2\,S',x}$
    and notice that the support of
    $C_{x^2}$ contains $x\,y$ which is not in $S'$.

    We then reduce it to
    $[-12\,x^3\,y^3+\cdots, x^2-1]$ by computing first the reduction
    of the right part by $x\,y+x-y-1$ and then by computing the
    reduction of the left parts without taking into account
    any monomials $\frac{\LCM(S')^2}{x\,y\,t}=\frac{x^3\,y^3}{t}$.

    As $\tilde{F}_{2\,S',x^2}=-x^4\,y^3+\cdots\neq F_{2\,S',x^2}$ and
    $\LM(\tilde{F}_{2\,S',x^2})=x^4\,y^3\prec\frac{\LCM(S')^2}{x^2}$ the relation
    succeeds!

    We update $R_{2\,S,x^2}=[-9\,x^2\,y^3+\cdots,x^2-1]$ and put
    it in $G$.

  \item We set $m=y^3$, $S'=\{1,y,x,y^2,y^3\}$,
    $R_{2\,S',B_1}=[x^3,0]$, $R_{2\,S',B_2}=[y^7,0]$,
    \begin{itemize}
    \item $R_{2\,S',1}=[6\,x^2\,y^6+\cdots,1]$, $\tilde{R}_{2\,S',1}=[6\,x^2\,y^6+\cdots,1]$,
    \item
      $R_{2\,S',y}=[-\frac{17}{2}\,x^2\,y^5-+\cdots,y-\frac{3}{2}]$,
      $\tilde{R}_{2\,S',y}=[-\frac{17}{2}\,x^2\,y^5-+\cdots,y-\frac{3}{2}]$,
    \item $R_{2\,S',x}=[\frac{189}{17}\,x\,y^6+\cdots,
      x+\frac{15}{17}\,y-\frac{31}{17}]$,
      $\tilde{R}_{2\,S',x}=[\frac{189}{17}\,x\,y^6+\cdots,
      x+\frac{15}{17}\,y-\frac{31}{17}]$,
    \item $R_{2\,S',y^2}=[\frac{1381}{189}\,x^2\,y^4+\cdots,
      y^2+\frac{106}{189}\,x -\frac{17}{63}\,y-\frac{134}{189}]$,
      $\tilde{R}_{2\,S',y^2}=[\frac{1381}{189}\,x^2\,y^4+\cdots,
      y^2+\frac{106}{189}\,x -\frac{17}{63}\,y-\frac{134}{189}]$,
    \item $R_{2\,S',x\,y}=[-9\,x\,y^6+\cdots,x\,y+x-y-1]$,
    \item $R_{2\,S',x^2}=[-9\,x^2\,y^5+\cdots,x^2-1]$.
    \end{itemize}
    We
    initialize
    $R_{2\,S',y^3}=[-\frac{4204}{1381}\,x\,y^5+\cdots,
    y^3+\frac{1354}{1381}\,y^2-\frac{607}{1381}\,x-\frac{190}{1381}\,y-\frac{770}{1381}]$
    using $\NormalFormHigherPart$ on $R_{2\,S',y}$ and $R_{2\,S',y^2}$.
  
    We then reduce it to
    $[-\frac{4204}{1381}\,x\,y^5-\frac{4620}{1381}\,y^6-\frac{25651}{1381}\,
    x^2\,y^3+\cdots,
    y^3+\frac{1354}{1381}\,y^2-\frac{607}{1381}\,x-\frac{190}{1381}\,y
    -\frac{770}{1381}]$ 
    by computing the
    reduction of the left parts without taking into account
    any monomials
    $\frac{\LCM(S')^2}{x\,y\,t}=\frac{x\,y^5}{t},
    \frac{\LCM(S')^2}{x^2\,t}=\frac{y^6}{t}$.

    As $\tilde{F}_{2\,S',y^3}=-\frac{25651}{1381}\,x^2\,y^3+\cdots\neq F_{2\,S',y^3}$ and
    $\LM(\tilde{F}_{2\,S',y^3})=x^2\,y^3=\frac{\LCM(S')^2}{y^3}$ the relation
    fails!

    $L$ is updated to $\{y^4\}$.
  \item We set $m=y^4$, $S'=\{1,y,x,y^2,y^3,y^4\}$,
    $R_{2\,S',B_1}=[x^3,0]$, $R_{2\,S',B_2}=[y^9,0]$,
    \begin{itemize}
    \item $R_{2\,S',1}=[6\,x^2\,y^8+\cdots,1]$, $\tilde{R}_{2\,S',1}=[6\,x^2\,y^8+\cdots,1]$,
    \item
      $R_{2\,S',y}=[-\frac{17}{2}\,x^2\,y^7-+\cdots,y-\frac{3}{2}]$,
      $\tilde{R}_{2\,S',y}=[-\frac{17}{2}\,x^2\,y^7-+\cdots,y-\frac{3}{2}]$,
    \item $R_{2\,S',x}=[\frac{189}{17}\,x\,y^8+\cdots,
      x+\frac{15}{17}\,y-\frac{31}{17}]$,
      $\tilde{R}_{2\,S',x}=[\frac{189}{17}\,x\,y^8+\cdots,
      x+\frac{15}{17}\,y-\frac{31}{17}]$,
    \item $R_{2\,S',y^2}=[\frac{1381}{189}\,x^2\,y^6+\cdots,
      y^2+\frac{106}{189}\,x -\frac{17}{63}\,y-\frac{134}{189}]$,
      $\tilde{R}_{2\,S',y^2}=[\frac{1381}{189}\,x^2\,y^6+\cdots,
      y^2+\frac{106}{189}\,x -\frac{17}{63}\,y-\frac{134}{189}]$,
    \item
      $R_{2\,S',y^3}=[\frac{5463}{1381}\,x\,y^7-\frac{4620}{1381}\,y^8
      -\frac{25651}{1381}\,x^2\,y^5+\cdots,
      y^3+\frac{1354}{1381}\,y^2-\frac{607}{1381}\,x-\frac{190}{1381}\,y
      -\frac{770}{1381}]$,
      $\tilde{R}_{2\,S',y^3}=[-\frac{25651}{1381}\,x^2\,y^5+\cdots,
      y^3+\frac{1354}{1381}\,y^2-\frac{607}{1381}\,x-\frac{190}{1381}\,y
      -\frac{770}{1381}]$,
    \item $R_{2\,S',x\,y}=[-9\,x\,y^8+\cdots,x\,y+x-y-1]$,
    \item $R_{2\,S',x^2}=[-9\,x^2\,y^7+\cdots,x^2-1]$.
    \end{itemize}
    We
    initialize
    $R_{2\,S',y^4}$ using $\NormalFormHigherPart$ on $R_{2\,S',y^2}$ and $R_{2\,S',y^3}$
    and then reduce it to
    $[-\frac{571416}{25651}\,x\,y^7+\cdots+\frac{504287}{25651}\,x^2\,y^4+\cdots,
    y^4-\frac{29900}{25651}\,y^3+\frac{7703}{25651}\,y^2+\frac{72041}{25651}\,x
    -\frac{35598}{25651}\,y-\frac{26811}{25651}]$
    by computing the
    reduction of the left parts without taking into account
    any monomials
    $\frac{\LCM(S')^2}{x\,y\,t}=\frac{x\,y^7}{t},
    \frac{\LCM(S')^2}{x^2\,t}=\frac{y^8}{t}$.

    As $\tilde{F}_{2\,S',y^4}=\frac{504287}{25651}\,x^2\,y^4+\cdots\neq F_{2\,S',y^3}$ and
    $\LM(\tilde{F}_{2\,S',y^4})=x^2\,y^4=\frac{\LCM(S')^2}{y^4}$ the relation
    fails!

    $L$ is updated to $\{y^5\}$.
  \item We set $m=y^5$, $S'=\{1,y,x,y^2,y^3,y^4,y^5\}$,
    $R_{2\,S',B_1}=[x^3,0]$, $R_{2\,S',B_2}=[y^{11},0]$,
    \begin{itemize}
    \item $R_{2\,S',1}=[6\,x^2\,y^{10}+\cdots,1]$,
      $\tilde{R}_{2\,S',1}=[6\,x^2\,y^{10}+\cdots,1]$,
    \item
      $R_{2\,S',y}=[-\frac{17}{2}\,x^2\,y^9-+\cdots,y-\frac{3}{2}]$,
      $\tilde{R}_{2\,S',y}=[-\frac{17}{2}\,x^2\,y^9-+\cdots,y-\frac{3}{2}]$,
    \item $R_{2\,S',x}=[\frac{189}{17}\,x\,y^{10}+\cdots,
      x+\frac{15}{17}\,y-\frac{31}{17}]$,
      $\tilde{R}_{2\,S',x}=[\frac{189}{17}\,x\,y^{10}+\cdots,
      x+\frac{15}{17}\,y-\frac{31}{17}]$,
    \item $R_{2\,S',y^2}=[\frac{1381}{189}\,x^2\,y^8+\cdots,
      y^2+\frac{106}{189}\,x -\frac{17}{63}\,y-\frac{134}{189}]$,
      $\tilde{R}_{2\,S',y^2}=[\frac{1381}{189}\,x^2\,y^8+\cdots,
      y^2+\frac{106}{189}\,x -\frac{17}{63}\,y-\frac{134}{189}]$,
    \item
      $R_{2\,S',y^3}=[\frac{5463}{1381}\,x\,y^9-\frac{4620}{1381}\,y^{10}
      -\frac{25651}{1381}\,x^2\,y^7+\cdots,
      y^3+\frac{1354}{1381}\,y^2-\frac{607}{1381}\,x-\frac{190}{1381}\,y
      -\frac{770}{1381}]$,
      $\tilde{R}_{2\,S',y^3}=[-\frac{25651}{1381}\,x^2\,y^7+\cdots,
      y^3+\frac{1354}{1381}\,y^2-\frac{607}{1381}\,x-\frac{190}{1381}\,y
      -\frac{770}{1381}]$,
    \item
      $R_{2\,S',y^4}=[-\frac{648369}{25651}\,x\,y^9+\cdots+\frac{504287}{25651}\,x^2\,y^6
      +\cdots,y^4-\frac{29900}{25651}\,y^3+\frac{7703}{25651}\,y^2+\frac{72041}{25651}\,x
      -\frac{35598}{25651}\,y-\frac{26811}{25651}]$,
      $\tilde{R}_{2\,S',y^4}=[\frac{504287}{25651}\,x^2\,y^6
      +\cdots,y^4-\frac{29900}{25651}\,y^3+\frac{7703}{25651}\,y^2+\frac{72041}{25651}\,x
      -\frac{35598}{25651}\,y-\frac{26811}{25651}]$,
    \item $R_{2\,S',x\,y}=[-9\,x\,y^{10}+\cdots,x\,y+x-y-1]$,
    \item $R_{2\,S',x^2}=[-9\,x^2\,y^9+\cdots,x^2-1]$.
    \end{itemize}
    We
    initialize
    $R_{2\,S',y^5}$ using $\NormalFormHigherPart$ on $R_{2\,S',y^3}$ and $R_{2\,S',y^4}$
    and then reduce it to
    $[12\,x\,y^9+6\,y^{10}+2\,x\,y^8+4\,x\,y^7+7\,x\,y^6-9\,x^2\,y^4-3\,x\,y^5
    -5\,x^2\,y^3-x^2\,y^2-10\,x^2\,y+6\,x^2, y^5+1]$
    by computing the
    reduction of the left parts without taking into account
    any monomials
    $\frac{\LCM(S')^2}{x\,y\,t}=\frac{x\,y^9}{t},
    \frac{\LCM(S')^2}{x^2\,t}=\frac{y^{10}}{t}$.

    As $\tilde{F}_{2\,S',y^5}=-9\,x^2\,y^4+\cdots\neq F_{2\,S',y^3}$ and
    $\LM(\tilde{F}_{2\,S',y^5})=x^2\,y^4\prec\frac{\LCM(S')^2}{y^5}$ the relation
    succeeds!

    We update $R_{2\,S,y^5}=[3\,x\,y^8+\cdots,y^5+1]$ and put
    it in $G$.
  \end{enumerate}
  The algorithms returns $G$, in particular the second part of each
  pair: $[x\,y+x-y-1,x^2-1,y^5+1]$.
\end{example}

\begin{remark}
  At step $y^4$, updating $R_{2\,S,y^3}$ into $R_{2\,S',y^3}$ makes
  the leading monomial of $F_{2\,S',y^3}$,
  $\frac{5463}{1381}\,\frac{\LCM(S')^2}{x\,y}$, totally different from
  this of $F_{2\,S,y^3}$,
  $-\frac{4204}{1381}\,\frac{\LCM(S)^2}{x\,y}$. 
  However, this is not the case when considering the leading
  monomials of $\tilde{F}_{2\,S',y^3}$ and $\tilde{F}_{2\,S,y^3}$, as
  $\LM(\tilde{F}_{2\,S',y^3})=-\frac{25651}{1381}\,\frac{\LCM(S')^2}{y^3}
  =\LM(\tilde{F}_{2\,S,y^3})\,\frac{\LCM(S')^2}{\LCM(S)^2}$.
\end{remark}


\section{Experiments}
\label{s:bench}
\input{6-bench}

\section*{Acknowledgments}
We thank the anonymous referees for their careful reading and their helpful
comments to improve this paper.
The authors are supported by the joint \textsc{ANR-FWF}
\textsc{ANR-19-CE48-0015} \textsc{ECARP} project, the \textsc{ANR}
grants \textsc{ANR-18-CE33-0011}
\textsc{Sesame} and \textsc{ANR-19-CE40-0018} \textsc{De Rerum Natura}
projects, the \textsc{PGMO} grant \textsc{CAMiSAdo} and the European
Union's Horizon 2020 research and innovation programme under the Marie
Sk\l{}odowska-Curie grant agreement N.~813211 (\textsc{POEMA}).




\bibliographystyle{elsarticle-harv} 
\bibliography{biblio}





\end{document}